\newcommand{\figref}[2]{\hyperref[#1]{\ref*{#1}\textcolor{blue}{(#2)}}}
\begin{document}

\title{Non-Hermitian Fermi-Dirac Distribution in Persistent Current Transport}
\author{Pei-Xin Shen}
\thanks{\hypertarget{Email:Shen}{\href{mailto:peixin.shen@outlook.com}{peixin.shen@outlook.com}}}
\affiliation{International Research Centre MagTop, Institute of Physics, Polish Academy of Sciences, Aleja Lotnikow 32/46, PL-02668 Warsaw, Poland}
\affiliation{Institute for Interdisciplinary Information Sciences, Tsinghua University, Beijing 100084, China}
\author{Zhide Lu}
\affiliation{Institute for Interdisciplinary Information Sciences, Tsinghua University, Beijing 100084, China}
\author{Jose L. Lado}
\affiliation{Department of Applied Physics, Aalto University, FI-00076 Aalto, Espoo, Finland}
\author{Mircea Trif}
\thanks{\hypertarget{Email:Trif}{\href{mailto:mtrif@magtop.ifpan.edu.pl}{mtrif@magtop.ifpan.edu.pl}}}
\affiliation{International Research Centre MagTop, Institute of Physics, Polish Academy of Sciences, Aleja Lotnikow 32/46, PL-02668 Warsaw, Poland}
\date{\today}

\begin{abstract}
Persistent currents circulate continuously without requiring external power sources. 
Here, we extend their theory to include dissipation within the framework of non-Hermitian quantum Hamiltonians. Using Green's function formalism, we introduce
a non-Hermitian Fermi-Dirac distribution
and derive an analytical expression for the persistent current that relies solely on the complex spectrum. 
We apply our formula to two dissipative models supporting persistent currents: (i) a phase-biased superconducting-normal-superconducting junction; 
(ii) a normal ring threaded by a magnetic flux.
We show that the persistent currents in both systems
exhibit no anomalies at any emergent exceptional points, whose signatures are only discernible in the current susceptibility. We validate our findings by exact diagonalization and extend them to account for finite temperatures and interaction effects. 
Our formalism offers a general framework for computing quantum many-body observables of non-Hermitian systems in equilibrium, 
with potential extensions to non-equilibrium scenarios.

\end{abstract}

\maketitle

\emph{Introduction.---}Recent intensive research in non-Hermitian (NH) physics \cite{Ashida2020NonHermitian,Bergholtz2021Exceptional,Ding2022NonHermitian,Okuma2023NonHermitian} has revealed intriguing phenomena in both the classical \cite{Xue2021Simple,Xue2022NonHermitian,Hu2023Steadystate,Li2024NonBloch,Zou2024Dissipative} and quantum realms \cite{Yu2022Experimental,Chen2023Topological,Ochkan2024NonHermitian}. The biorthogonal and non-Bloch frameworks have reshaped the conventional bulk-edge correspondence \cite{Kunst2018Biorthogonal,Yao2018Edge,Yao2018NonHermitian,Yang2020NonHermitian}, while the symmetry classifications of the NH matrices have enriched the topological phases compared to their Hermitian counterparts \cite{Bernard2002Classification,Kawabata2019Symmetry,Altland2021Symmetry,Yu2021Unsupervised,Denner2021Exceptional}. 
Exceptional points (EPs), where the NH Hamiltonian is not diagonalizable \cite{Kato1995Perturbation,Heiss2012Physics,Golub2013Matrix}, 
can enhance sensing capabilities 
\cite{Wiersig2014Enhancing,Chen2017Exceptional,Hodaei2017Enhanced,Lau2018Fundamental} 
and trigger new critical phenomena 
\cite{Lee2014Entanglement,San-Jose2016Majorana,Ashida2017Paritytimesymmetric,Tserkovnyak2020Exceptional}.

\begin{figure}
\includegraphics[width=\linewidth]{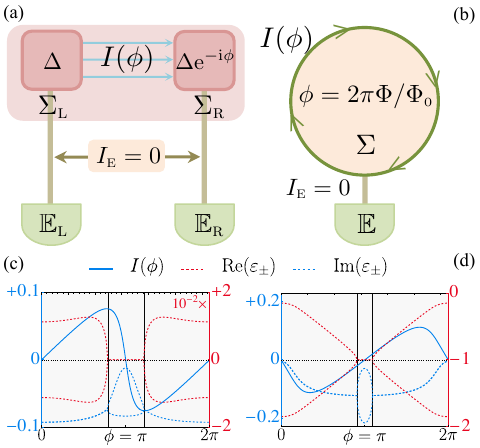}
    \caption{Schematic of systems coupled to external reservoirs $\mathbb{E}$: 
    (a) an SNS junction with a phase bias $\phi$; 
    (b) a normal metallic ring threaded by normalized magnetic flux $\phi$. Each system is characterized by an effective NH Hamiltonian $\mathcal{H}_{\rm eff}$ that includes a complex self-energy $\Sigma$ from $\mathbb{E}$. In equilibrium, both models maintain persistent currents $I(\phi)$ and zero leakage currents $I_\textsc{e}$. (c) and (d) Complex spectra $\varepsilon_\pm$ showing EPs around $\pi$ (black lines) and persistent current $I(\phi)$ as a function of $\phi$. The current $I(\phi)$ calculated using Eq.~\eqref{Eqn:CurrentImTrLog} shows no signs of singularity at the EPs. Parameters: (c) $N_\textsc{l}=N_\textsc{m}=N_\textsc{r}=4$, $N_\textsc{e} = 101$, $t = -\Delta = -1$, $\kappa=0.4 t$, $\mu=g = -1.1$; (d) $N=6$, $N_\textsc{e} = 101$, $t = \kappa = \mu =-1$, $g = 0$, $t_i \sim t*\mathrm{Unif}(0.7, 1.3)$.}
    \label{Fig:SketchSpectrum}
\end{figure}

In the field of open quantum systems, NH physics is instrumental in characterizing the dissipative nature of systems \cite{Kohler2005Driven,Datta2005Quantum,Landi2022Nonequilibrium}. 
The Lindblad formalism \cite{Gorini1976Completely,Lindblad1976Generators,Prosen2008Third,McDonald2023Third} 
provides routes to address system-reservoir interactions. By neglecting quantum jumps or focusing on Gaussian systems \cite{Minganti2019Quantum,McDonald2022Nonequilibrium,Song2019NonHermitian}, the dynamics is dictated solely by an effective NH Hamiltonian $\mathcal{H}_{\rm eff}$.
The Green's function formalism presents an alternative path to $\mathcal{H}_{\rm eff}$ by integrating out external reservoirs $\mathbb{E}$ to include complex self-energies $\Sigma \neq \Sigma^\dagger$ \cite{Rickayzen1980Green,Economou2006Green,Wang2012Green,Odashima2016Pedagogical,Shen2018Quantum,Kozii2024NonHermitian}. Although the spectral properties of $\mathcal{H}_{\rm eff}$ have been extensively explored, quantum many-body observables \cite{Meden2023Mathcal}, such as the supercurrents in phase-biased superconducting-normal-superconducting (SNS) junctions shown in Fig.~\figref{Fig:SketchSpectrum}{a}, are currently under active discussion.
Existing approaches, such as the derivative of complex eigenvalues \cite{Cayao2023NonHermitian,Li2024Anomalous} and the expectation values obtained from the left-right (LR) or right-right (RR) eigenvectors \cite{Kornich2023CurrentVoltage}, often yield anomalies at EPs \footnote{We adopt the LR/RR-basis definition in Ref.~\cite{Kornich2023CurrentVoltage}, where the electron density is anomalous at EPs, and the PT-symmetric Hamiltonian is distinct from $\mathcal{H}_{\rm eff}$ studied in this Letter within the context of open quantum systems \cite{Meden2023Mathcal}.}, calling for a microscopic approach to grasp the subtleties of the NH persistent current transport.

In this Letter, we provide a resolution to this conundrum grounded on a NH Fermi-Dirac distribution
associated with the biorthogonal single-particle eigenstates.
We find that the supercurrent $I(\phi)$ in an SNS junction biased by a phase $\phi$ and coupled to reservoirs is given by (in units of $e/\hbar$):
\begin{equation}
    I(\phi) = - \frac{1}{\pi} \frac{\mathrm{d}}{\mathrm{d}\phi} \mathrm{Im} \mathrm{Tr} (\mathcal{H}_{\rm eff} \ln \mathcal{H}_{\rm eff}) \,.
    \label{Eqn:CurrentImTrLog}
\end{equation}
This formula is derived in the wide-band limit, which is non-perturbative and can also accurately describe the strong coupling regime. Moreover, it also applies to the persistent current in a normal mesoscopic ring threaded by a magnetic flux, as shown in Fig.~\figref{Fig:SketchSpectrum}{b}. Our results align with the exact diagonalization of the full Hermitian system including the reservoir and do not exhibit any singularities at EPs for both models [see Figs.~\figref{Fig:SketchSpectrum}{c} and \figref{Fig:SketchSpectrum}{d}]. We further generalize Eq.~\eqref{Eqn:CurrentImTrLog} to finite temperatures and find that persistent currents are reduced, which is also observed when many-body interactions are taken into account. Finally, as shown in Fig.~\ref{Fig:Susceptibility}, the signatures of EPs can instead become evident in the current susceptibility associated with response to an $ac$ phase bias drive. Our formalism not only clarifies the behavior of persistent currents in fermionic NH systems but also sets the stage for analyzing other quantum many-body observables with dissipation.

\emph{Phenomenology and methodology.---}Initially, we outline a heuristic explanation of our main findings, deferring the technical details to subsequent sections and the Supplemental Material (SM) \cite{Note0}.
\nocite{Shen2023Majoranamagnon}
We focus, for simplicity, on spinless fermionic systems with Hamiltonians that depend on a parameter $\phi$, such as in phase-biased SNS junctions $H_{\rm sys}(\phi) = \vec{C}^\dagger \mathcal{H}_{\rm sys} (\phi) \vec{C}/\mathbf{2}$, where $\mathcal{H}_{\rm sys} (\phi)$ is the Bogoliubov-de Gennes (BdG) Hamiltonian, $\vec{C} = (c_1, \dots, c_\textsc{n}, c^\dagger_1, \dots, c^\dagger_\textsc{n})^\textsc{t}$ is a $\mathbf{2}N$-dimensional spinor, and $c_j^\dagger$ ($c_j$) is the fermionic creation (annihilation) operator at site $j$. The discussion below also applies to normal metals described by $\mathcal{H}_{\rm sys} (\phi)$ on an $N$-dimensional basis $\vec{C} = (c_1, \dots, c_\textsc{n})^\textsc{t}$, substituting $\mathbf{2}$ for $\mathbf{1}$. For brevity, we will use calligraphy to denote the first quantized operators and omit the explicit dependence on $\phi$ in our notation for Hamiltonians, eigenvalues, and eigenvectors hereafter.

For isolated and Hermitian systems $\mathcal{H}_{\rm sys}$, 
the persistent current $I_{\rm iso}(\phi)$ in the many-body ground state $E_0$ follows \cite{Beenakker1992Superconducting},
\begin{equation}
\label{Eqn:CurrentHermitian}
I_{\rm iso}(\phi) 
    = \mathbf{2}\frac{\mathrm{d} E_0}{\mathrm{d} \phi}  
    = \sum_{\epsilon_n \leqslant 0} \frac{\mathrm{d} \epsilon_n}{\mathrm{d} \phi} 
    = \sum_{\epsilon_n \leqslant 0} \frac{\braket{\psi_n |\mathcal{J}| \psi_n}}{\mathbf{2}} \, ,
\end{equation}
where the factor of two stems from the Cooper pair, 
$J = \vec{C}^\dagger \mathcal{J} \vec{C}/\mathbf{2}$ is the persistent current operator,
$\epsilon_n$ and $\ket{\psi_n}$ are eigenvalues and eigenstates of $\mathcal{H}_{\rm sys}$. Given the local conservation law of $n_j = c^\dagger_j c_j$ in the $\mathbb{N}$ segment (e.g., the normal part in SNS junctions), the site-resolved current operator \cite{Landi2022Nonequilibrium}, 
$J_j = -\mathrm{i}t_j (c^\dagger_j c_{j+1} - c^\dagger_{j+1} c_j )$, 
follows the continuity equation in equilibrium: $0 = \braket{\dot{n}_j} = \mathrm{i} \braket{[H_{\rm sys}, n_j]} = \braket{J_{j}} - \braket{J_{j-1}}, \forall j \in \mathbb{N}$, where $t_j$ is the hopping strength at site $j$. Therefore,
we set $J \equiv J_j$ at the first site of $\mathbb{N}$ and omit the subscript.

To account for dissipation, we couple the system to a thermal reservoir $\mathbb{E}$. In the long-time limit, the system reaches equilibrium and $\mathbb{E}$ acts as a source of dephasing \cite{Buttiker1985Small,Stern1990Phase,Loss1991Dephasing,Chang1997Control,Mortensen2000Dephasing,Gramespacher2000Distribution,Belogolovskii2001Charge,Belogolovskii2003Phasebreaking,Beri2009Dephasingenabled}. This coupling leads to the emergence of a complex self-energy $\Sigma(\omega)$ in the system.
In the wide-band limit $\Sigma(\omega) \approx \Sigma(0)$ \cite{Jauho1994Timedependent,Potts2021Thermodynamically,Yu2024NonHermitian}, the system is effectively described by the NH Hamiltonian $\mathcal{H}_{\rm eff} \equiv \mathcal{H}_{\rm sys} + \Sigma(0)$,
which exhibits a complex spectrum $\varepsilon_{n}$ with $\mathrm{Im}\, \varepsilon_n \leqslant 0$ 
and supports biorthogonal single-particle modes \cite{Brody2013Biorthogonal}: $\mathcal{H}_{\rm eff} \ket{\psi^\textsc{r}_n} = \varepsilon_n \ket{\psi^\textsc{r}_n}$, $\mathcal{H}_{\rm eff}^\dagger \ket{\psi^\textsc{l}_n} = \varepsilon^*_n \ket{\psi^\textsc{l}_n}$, and $\braket{\psi^\textsc{l}_n|\psi^\textsc{r}_m} = \delta_{nm}$. Using this biorthogonal basis, we represent the retarded Green's function of the system as \cite{Chen2018Hall,Yang2021Dissipative,Arouca2023Topological}
\begin{align}
\label{Eqn:BiorthogonalGF}
G_{\rm sys} (\omega) 
    = \frac{1}{\omega - \mathcal{H}_{\rm eff}}
    = \sum_n\frac{\ket{\psi^\textsc{r}_n}\bra{\psi^\textsc{l}_n}}{\omega - \varepsilon_{n}} \,,
\end{align}
and obtain the density of states operator $\rho (\omega) = \mathrm{i}[G_{\rm sys} (\omega) - G^\dagger_{\rm sys} (\omega)]/2\pi$. 
In thermal equilibrium, any correlator can be calculated by $\braket{c^\dagger_i c_j} = \int \braket{j|\rho(\omega)|i} f_\textsc{fd}(\omega) \mathrm{d} \omega$, where $f_\textsc{fd}(\omega)$ is the Fermi-Dirac distribution of the entire system. Given $f_\textsc{fd}(\omega) = \Theta(-\omega)$ at zero temperature, we derive an analytical correlator $\braket{c^\dagger_i c_j}$ by integrating over $\omega$:
\begin{align}
\label{Eqn:CorrelatorGF}
\braket{c^\dagger_i c_j} 
    &= \frac{\mathrm{i}}{2\pi} \sum_n 
    \left( \psi^\textsc{l*}_{ni}\psi^\textsc{r}_{nj} \ln \varepsilon_n - \psi^\textsc{r*}_{ni}\psi^\textsc{l}_{nj} \ln \varepsilon^*_n  \right) \,,
\end{align}
where $\psi^\textsc{l/r}_{nj} \equiv \braket{j|\psi^\textsc{l/r}_{n}}$, $\ln \varepsilon_n \equiv \ln |\varepsilon_n| + \mathrm{i} \arg \varepsilon_n$ and $-\pi \leqslant \arg \varepsilon_n \leqslant 0$ \footnote{The $\ln |\varepsilon_n|$ term stems from the principal value (PV) in the integrand, which is typically disregarded in the literature \cite{Rickayzen1980Green}, since $\ket{\psi^\textsc{l}_n} \approx \ket{\psi^\textsc{r}_n}$ in the weak coupling limit. However, PV plays a crucial role in correctly determining observables in EPs, which typically occur in the strong coupling regime.}. Similarly, $\braket{c_i c_j}$ is obtained by replacing $i$ with $N+i$ on the right-hand side of Eq.~\eqref{Eqn:CorrelatorGF}. Therefore, the expectation value of a general quadratic Hermitian operator $O = \vec{C}^\dagger \mathcal{O} \vec{C}/\mathbf{2}$ is \cite{Note0}
\begin{align}
\label{Eqn:HermitianO}
\braket{O} 
    = \mathrm{Im} \sum_n 
    \frac{\braket{\mathcal{O}}^\textsc{lr}_n f_{\rm eff}(\varepsilon_n)}{\mathbf{2}}
    = 
    \frac{\mathrm{Im} \mathrm{Tr} [ \mathcal{O} f_{\rm eff}(\mathcal{H}_{\rm eff})]}{\mathbf{2}} 
    \,, 
\end{align}
where $\braket{\mathcal{O}}^\textsc{lr}_n \equiv \braket{\psi^\textsc{l}_n |\mathcal{O}| \psi^\textsc{r}_n}$ and $f_{\rm eff} (\varepsilon) \equiv - (1/\pi)\ln \varepsilon$ acts as a Fermi-Dirac distribution for NH systems, whose imaginary part reduces to $\Theta(-\varepsilon)$ as $\mathrm{Im}\, \varepsilon \rightarrow 0$ \footnote{Observables are gauge-invariant under $f_{\rm eff} \rightarrow f_{\rm eff} + \mathbb{R}$: All observables stay the same with $f_{\rm eff} (\varepsilon) = - (1/\pi)\ln (\varepsilon/ C)$, $\forall C \in \mathbb{R}$ \cite{Note0}. Hence, we set $C = 1$ for simplicity.}. Eq.~\eqref{Eqn:HermitianO} represents one of our main results and remains continuous at EPs \footnote{Owing to the analytic property of $f_{\rm eff}$, the EP is a removable singularity for physical observables, albeit a branch point for biorthogonal wavefunctions \cite{Note0}}.
The persistent current 
can be calculated by substituting $\mathcal{O}$ with $\mathcal{J}$ in Eq.~\eqref{Eqn:HermitianO}.
Furthermore, applying the identity $\braket{\mathcal{J}}^\textsc{lr}_n = \mathbf{2} \partial_{\phi} \varepsilon_n$ for each biorthogonal single-particle mode
and rearranging the derivatives, one can obtain Eq.~\eqref{Eqn:CurrentImTrLog} and verify that it recovers Eq.~\eqref{Eqn:CurrentHermitian} in the Hermitian limit.
It is important to emphasize that our Eq.~\eqref{Eqn:CurrentImTrLog} is distinct from a simple continuation of Eq.~\eqref{Eqn:CurrentHermitian} to complex eigenvalues $\sum_{\mathrm{Re}\, \varepsilon_n \leqslant 0} \partial_{\phi} \varepsilon_n$ (or, equivalently, the LR-basis current $I_\textsc{lr}(\phi) \equiv \sum_{\mathrm{Re}\, \varepsilon_n \leqslant 0} \braket{\mathcal{J}}^\textsc{lr}_n/ \mathbf{2}$) recently proposed in Refs.~\cite{Cayao2023NonHermitian,Li2024Anomalous,Kornich2023CurrentVoltage}, as well as the RR-basis current $I_\textsc{rr}(\phi) \equiv \sum_{\mathrm{Re}\, \varepsilon_n \leqslant 0} \braket{\mathcal{J}}^\textsc{rr}_n/ \mathbf{2}$ widely adopted with post-selection \cite{Kawabata2023Entanglement,Herviou2019Entanglement,Naghiloo2019Quantum}. As demonstrated below, both of these definitions fail to accurately describe the persistent current in equilibrium, whereas Eq.~\eqref{Eqn:CurrentImTrLog} is in full agreement with the exact diagonalization. 

\emph{Model reservoir and self-energy.---}To validate our findings, we connect the system $H_{\rm sys}$ to a $N_\textsc{e}$-site fermionic reservoir $H_{\rm res} = \sum\nolimits_{j} \big[(tc^\dagger_j c_{j+1}+{\rm H. c. }) + g c^\dagger_j c_j \big]$,
where $t<0$ is the hopping strength and $g$ is the chemical potential. This specific reservoir is chosen for its dual analytical and numerical merits. First, connecting one end of $\mathbb{E}$ to the $l$-site of the system via $H_{\rm tun} = \kappa (c^\dagger_{N_\textsc{e}} c_{l} + c^\dagger_{l} c_{N_\textsc{e}})$  with coupling strength $\kappa<0$
will induce a self-energy $\Sigma_l(0) = \Sigma(0) \otimes \ket{l}\bra{l}$ onto $\mathcal{H}_{\rm sys}$, where $\Sigma(0) = - \kappa^2/t^2 (\tau_z g /2 + \mathrm{i} \sqrt{t^2 - (g/2)^2} )$
and $\tau_z$ is the Pauli-$Z$ matrix acting in the particle-hole space \cite{Datta2005Quantum}. This expression is exact when $N_\textsc{e} \rightarrow \infty$ and also applies to normal metals upon removal of $\tau_z$ \cite{Note0}. 
Second, the tight-binding form of $H_{\rm res}$ allows us to compare Eqs.~\eqref{Eqn:CurrentImTrLog} and \eqref{Eqn:CurrentHermitian} by performing an exact diagonalization of the entire Hermitian system $H_{\rm tot} = H_{\rm sys} + H_{\rm res} + H_{\rm tun}$. Next, we apply this benchmark framework to two concrete NH models: a phase-biased SNS junction and a normal ring threaded by a magnetic flux.

\emph{NH SNS junctions.---}The SNS junction is a pivotal platform for quantum transport, whose Hamiltonian reads
\begin{align} 
    \label{Eqn:HamiltonianSNS}
    \mkern-8mu
    H_\textsc{sns}
    \mkern-5mu = \mkern-5mu
    \sum\nolimits_j \mkern-4mu
    (t_j c^\dagger_j c_{j+1} + \Delta_j \mathrm{e}^{-\mathrm{i} \phi_j} c^\dagger_j c^\dagger_{j+1} + 
    \mkern-2mu \mathrm{H.c.} ) \mkern-2mu + \mkern-2mu \mu c^\dagger_j c_j\,,
\end{align}
where $\Delta_j$ is the superconducting gap with phase $\phi_j$ at site $j$, $\mu$ is the chemical potential and $t_j = t$. The number of sites in the left, middle, and right parts is $N_\textsc{l}, N_\textsc{m}, N_\textsc{r}$, respectively. The middle segment is normal metal by setting $\Delta_j=\phi_j=0$, $\forall j\in \mathbb{N} \equiv [N_\textsc{l},N_\textsc{l} + N_\textsc{m}]$. The outer segments are superconductors $\Delta_j=\Delta\neq0$ with phase bias applied such that $\phi_j=\phi$ in the right segment and $\phi_j=0$ in the left segment. %

As depicted in Fig.~\figref{Fig:SketchSpectrum}{a}, the SNS junction incorporates self-energies $\Sigma_\textsc{l} = \Sigma_1(0)$ and $\Sigma_\textsc{r} = \Sigma_\textsc{n}(0)$, after being connected to two separate reservoirs at its ends. This results in an effective NH Hamiltonian $\mathcal{H}_{\rm eff} = \mathcal{H}_\textsc{sns} + \Sigma_\textsc{l} + \Sigma_\textsc{r}$, whose spectrum exhibits pairs $(+\varepsilon_n, - \varepsilon^*_n)$ pertaining to the particle-hole symmetry of NH systems \cite{Kawabata2019Symmetry}. Fig.~\figref{Fig:SketchSpectrum}{c} highlights a pair of complex spectra $\varepsilon_\pm$ with EPs near $\phi=\pi$, where $\mathrm{Re}\, \varepsilon_\pm$ are pinned to zero. The calculation of supercurrents in the presence of EPs has recently garnered attention and sparked ongoing debates. As mentioned above, considering $\braket{\mathcal{J}}^\textsc{lr}_n = \mathbf{2} \partial_{\phi} \varepsilon_n$, a simple generalization of Eq.~\eqref{Eqn:CurrentHermitian} to complex eigenvalues $\sum_{\mathrm{Re}\, \varepsilon_n \leqslant 0} \partial_{\phi} \varepsilon_n$ \cite{Cayao2023NonHermitian,Li2024Anomalous} is equivalent to the LR-basis current $I_\textsc{lr}(\phi)$ \cite{Kornich2023CurrentVoltage}. However, as shown in Fig.~\figref{Fig:EPsComparison}{a}, this approach results in a divergent supercurrent due to the non-differentiable nature of EPs. On the other hand, the RR-basis current $I_\textsc{rr}(\phi)$ has a finite but non-smooth value at EPs, and also exhibits asymmetry around $\pi$. Furthermore, $I_\textsc{rr}(\phi)$ does not adhere to the local conservation law and will show distinct curves for different $j \in \mathbb{N}$ (see SM \cite{Note0} for details). In stark contrast, 
the current $I(\phi)$ computed by Eq.~\eqref{Eqn:CurrentImTrLog} exhibits no anomalies at EPs. It matches excellently with the current calculated from Eq.~\eqref{Eqn:CurrentHermitian} 
by exact diagonalization for the entire Hermitian $H_{\rm tot}$ with a large reservoir. 
Compared to current $I_{\rm iso}(\phi)$ in an isolated SNS junction, we observe an enhancement in $I(\phi)$ within the moderate coupling regime $\kappa = 0.4 t$. This seemingly counterintuitive effect arises because
the lower-energy mode $\varepsilon_-$ has a negative contribution to the current, which is effectively balanced by $\varepsilon_+$ due to level broadening in the NH case. However, as $\kappa/t$ increases, $I(\phi)$ starts to decrease, since dissipation also suppresses the positive current contribution from other states \cite{Note0}.

\begin{figure}
    \centering
    \includegraphics[width=\linewidth]{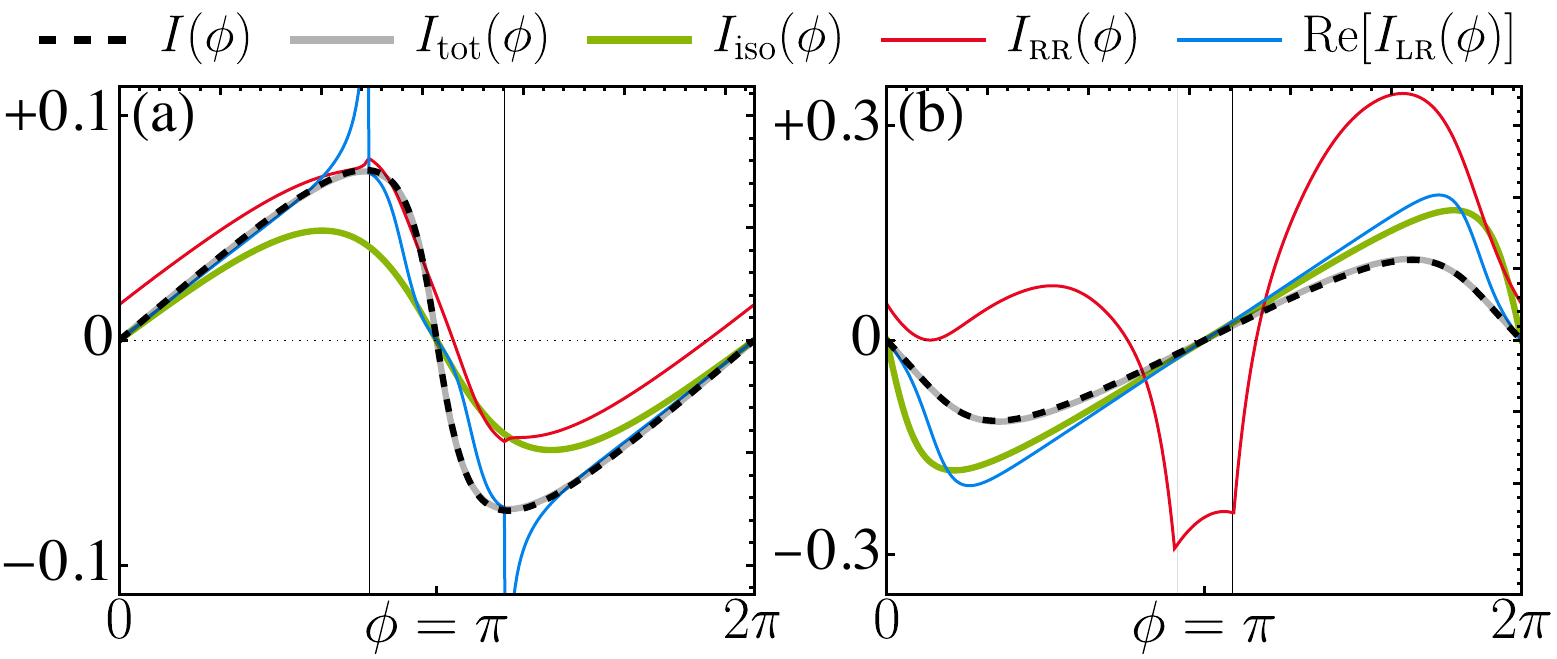}
    \caption{Methodological comparisons of persistent currents in two NH systems: (a) a phase-biased SNS junction; (b) a normal ring threaded by a magnetic flux. The current $I(\phi)$ (dashed) computed by Eq.~\eqref{Eqn:CurrentImTrLog} matches the exact diagonalization (gray) of the full Hermitian system. The isolated system (green) acts as a reference, showing an enhanced (reduced) current for the SNS (ring) model upon coupling to reservoirs. In (a), LR-basis currents (blue) diverge at the EPs (black lines), a feature not observed in (b), where a pair of EP-modes $\varepsilon_\pm$ cancel out the divergence. 
    RR-basis currents (red) violate the local conservation law and exhibit asymmetry around $\pi$.
    The parameters are the same as in Fig.~\ref{Fig:SketchSpectrum}.}
    \label{Fig:EPsComparison}
\end{figure}

\emph{NH normal rings.---}A mesoscopic ring threaded by a magnetic flux $\Phi$ also carries a persistent current because the coherence length of the wavefunction extends over its entire circumference \cite{Buttiker1983Josephson,Levy1990Magnetization,Bary-Soroker2008Effect,Bluhm2009Persistent,Bleszynski-Jayich2009Persistent}. The gauge-invariant tight-binding Hamiltonian is given by \cite{Carini1984Origin,Browne1984Periodicity,Cheung1988Persistent}
\begin{align} 
\label{Eqn:HamiltonianRing}
H_{\rm ring} =
    &\sum\nolimits_j (t_j \mathrm{e}^{-\mathrm{i} \phi_j} c^\dagger_j c_{j+1}  + \mathrm{H.c.} ) + \mu c^\dagger_j c_j \,,
\end{align}
where the normalized magnetic flux $\phi_\textsc{n} = \phi = 2 \pi \Phi/\Phi_0$ is placed between the $N$-th and first site, leaving other $\phi_j = 0$. Here, $\Phi_0 = h/e$ is the flux quantum that reflects the periodicity of Eq.~\eqref{Eqn:HamiltonianRing} in the flux $\Phi$ \cite{Byers1961Theoretical}. 
We account for elastic scatterings by assigning uniformly random hopping strengths along the ring \footnote{$\{ t_j \}_{j=1}^{6} \mkern-5mu = \mkern-5mu \{-0.859915, -0.884918, -0.918446, -0.846311,$ $-1.19937, \mkern-3mu -0.984676 \}$ for numerical results in the Letter.}. However, since the local conservation law spans the whole ring, $\braket{J_j}$ remains uniform across all sites.

As illustrated in Fig.~\figref{Fig:SketchSpectrum}{b}, the fermionic reservoir is connected to a single site within the ring \cite{Akkermans1991Relation}, inducing a self-energy $\Sigma \equiv \Sigma_\textsc{n}(0)$. Consequently, the ring is described by $\mathcal{H}_{\rm eff} = \mathcal{H}_{\rm ring} + \Sigma$. When $\mu = 0$, similar to the SNS junctions, the LR-basis current $I_\textsc{lr}(\phi)$ of the ring will diverge at EPs  near $\phi = \pi$.
Here, in order to explore different impacts of EPs,
we set $\mu = -1$ and shift two EP-modes $\varepsilon_\pm$ below the Fermi level, as depicted in Fig.~\figref{Fig:SketchSpectrum}{d}. Since $\varepsilon_\pm$ contribute to $I_\textsc{lr}(\phi)$ in pairs, their divergences cancel out, resulting in a smooth curvature for $I_\textsc{lr}(\phi)$ in Fig.~\figref{Fig:EPsComparison}{b}. 
The RR-basis current $I_\textsc{rr}(\phi)$ violates the local conservation law and exhibits a non-sinusoidal curve due to inhomogeneous hopping strengths. Neither of these approaches can accurately describe the persistent current $I(\phi)$. However, the current $I(\phi)$ calculated using Eq.~\eqref{Eqn:CurrentImTrLog} aligns with exact diagonalization results that include a large reservoir. Compared to the current in the isolated ring, $I(\phi)$ is reduced because the positive current contributions from single-particle modes are diluted by dissipation-induced level broadening \cite{Landauer1985Resistance}. These conclusions remain consistent regardless of the number of reservoirs.

\begin{figure*}
    \centering
    \includegraphics[width=\textwidth]{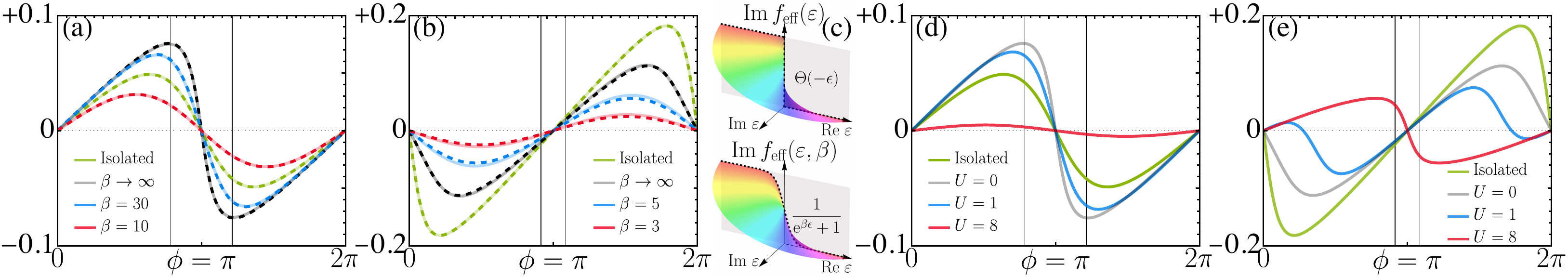}	
    \caption{Effects of temperature and interactions on persistent currents $I(\phi)$ in NH systems. Panels (a) and (b) display $I(\phi)$ at finite temperatures $\beta=1/k_\textsc{b}T$ for an SNS junction and a mesoscopic ring, compared to isolated systems at $T=0$ (green). The currents computed using Eq.~\eqref{Eqn:CurrentReTrLogGamma} (dashed) are consistent with the exact diagonalization (solid), indicating a decrease in $I(\phi)$ as $T$ increases. (c) Illustration of the imaginary part of the NH Fermi-Dirac distribution at zero and finite temperatures, where $\mathrm{Im}\, f_{\rm eff} (\varepsilon, \beta)$ in Eq.~\eqref{Eqn:EffectiveDistributionBeta} will reduce to the conventional Fermi-Dirac distribution (dashed) as $\mathrm{Im}\, \varepsilon \rightarrow 0$.
    In (d) and (e), many-body interactions are shown to suppress the amplitude of $I(\phi)$ in both systems. The parameters are as in Fig.~\ref{Fig:SketchSpectrum}. No singularities are found at EPs (black lines) in all cases presented.}
    \label{Fig:TemparatureInteraction}
\end{figure*}

\emph{Finite temperature and interaction effects.---}First, we consider the effect of thermal fluctuations on the persistent current \footnote{To maintain a unified formalism across normal rings and SNS junctions, we adopt a constant $\Delta$ and refrain from self-consistent calculations of observables for SNS junctions.}. This requires integrating $\rho(\omega)$ over $\omega$ using $f_\textsc{fd}(\omega, \beta) = 1/(\mathrm{e}^{\beta \omega} +1)$ with $\beta=1/k_\textsc{b} T$ and $k_\textsc{b}$ is the Boltzmann constant. Subsequently, $f_{\rm eff}(\varepsilon)$ in Eq.~\eqref{Eqn:HermitianO} is extended to form an effective
NH Fermi-Dirac distribution
at finite temperatures:
\begin{equation}
\label{Eqn:EffectiveDistributionBeta}
    f_{\rm eff}(\varepsilon, \beta) 
    = - \frac{1}{\pi} \left[ \Psi \left(\frac{1}{2}+\frac{\mathrm{i} \beta \varepsilon}{2\pi}\right) - \frac{\mathrm{i} \pi}{2} \right] \,,
\end{equation}
where the digamma function $\Psi$ \cite{Weisstein2024Digamma} is defined as the derivative of the log-gamma function $\mathrm{log\Gamma}$ \cite{Weisstein2024LogGamma}. Using Eq.~\eqref{Eqn:EffectiveDistributionBeta}, we find that at finite temperatures, Eq.~\eqref{Eqn:CurrentImTrLog} becomes (see SM \cite{Note0}):
\begin{equation}
\label{Eqn:CurrentReTrLogGamma}
    I(\phi, \beta) = \frac{2}{\beta} \frac{\mathrm{d}}{\mathrm{d} \phi} \mathrm{Re} \,\mathrm{Tr}\, \mathrm{log\Gamma} \left( \frac{1}{2} + \frac{\mathrm{i} \beta }{2\pi} \mathcal{H}_{\rm eff} \right) \,,
\end{equation}
which extends the expression $\mathbf{2}\partial_\phi F$ for the persistent current, where $F$ denotes the free energy in Hermitian systems \cite{Beenakker1992Superconducting}, to encompass NH scenarios.
As shown in Figs.~\figref{Fig:TemparatureInteraction}{a} and \figref{Fig:TemparatureInteraction}{b}, Eq.~\eqref{Eqn:CurrentReTrLogGamma} includes Eq.~\eqref{Eqn:CurrentImTrLog} when at $T=0$ and accurately matches the currents for $T\neq0$ calculated by exact diagonalization. This indicates a decrease in currents for both systems as $T$ increases. As illustrated in Fig.~\figref{Fig:TemparatureInteraction}{c}, such an excellent agreement is grounded on the fact that $\mathrm{Im}\, f_{\rm eff}(\varepsilon, \beta)$ in Eq.~\eqref{Eqn:EffectiveDistributionBeta} will revert to $f_\textsc{fd}(\varepsilon, \beta)$ as $\mathrm{Im}\, \varepsilon \rightarrow 0$. The smoothness of persistent currents near EPs can be attributed to the analytic properties of $f_{\rm eff}(\varepsilon, \beta)$ in the lower half of the complex plane \cite{Note0}.

To examine potential characteristics of EPs
in the presence of many-body interactions, we introduce the electrostatic repulsion $H_{\rm int} = U \sum_{j\in \mathbb{N}} (n_j - 1/2)(n_{j+1} - 1/2)$, where $U$ is the interaction strength. In such interacting scenarios, the first equality in Eq.~\eqref{Eqn:CurrentHermitian} remains valid for the ground state.
To maintain each reservoir as large as $N_\textsc{e} = 101$, we perform the density matrix renormalization group (DMRG) algorithm via DMRGpy \cite{Note10}. 
Figs.~\figref{Fig:TemparatureInteraction}{d} and \figref{Fig:TemparatureInteraction}{e} show that as $U$ increases, the amplitude of $I(\phi)$ in both systems will eventually be suppressed to zero due to the enhanced electron-electron scattering \footnote{In the moderate range $U$, the current amplitude may fluctuate in normal rings due to the shift of the effective Fermi level.}. No signatures of EPs are detected in the current in any of the cases presented with respect to temperatures and interactions.

\emph{Current susceptibility.---}To elucidate the presence of EPs in systems with a phase-dependent spectrum,
here we derive their linear response to a time-dependent phase driving $\phi(\tau)=\phi+\delta\phi(\tau)$, with $\delta\phi(\tau)\ll1$. The current susceptibility that characterizes the response is given by \cite{Trivedi1988Mesoscopic,Ferrier2013Phasedependent,Dassonneville2013Dissipation}: 
\begin{align}
\label{Eqn:CurrentSusceptibility}
\Pi(\phi,\tau) 
    &= - \mathrm{i} \Theta(\tau) \braket{[J(\tau), J(0)]} \,.
\end{align}
We first transform Eq.~\eqref{Eqn:CurrentSusceptibility} to the frequency space $\Pi (\phi, \omega) = \int \Pi(\phi, \tau) \mathrm{e}^{+\mathrm{i} \omega \tau} \mathrm{d} \tau $
and use the biorthogonal modes to obtain
\begin{align}
    \label{Eqn:ImaginarySusceptibility}
    \mathrm{Im}\, \Pi (\phi, \omega) 
    =&\, \pi t_j^2 \, \mathrm{Re} [\mathbb{P}(\phi, +\omega) -  \mathbb{P}(\phi, -\omega)] \,, \nonumber\\
    \mathbb{P}(\phi, \omega) 
    =&+P_{j+1,j,j+1,j} (\omega) - P_{j,j,j+1,j+1} (\omega) \nonumber \\
    &+ P_{j,j+1,j,j+1} (\omega) - P_{j+1,j+1,j,j} (\omega) \,, 
\end{align}
with $P_{ijkl} (\omega) \equiv \int \braket{i|\rho(\omega')|j} \braket{k|\rho(\omega + \omega')|l} f_\textsc{fd}(\omega')\mathrm{d} \omega'$. In the case  of 
SNS junctions, $\mathbb{P}(\phi, \omega)$ contains four additional terms stemming from the contributions of the holes
\footnote{The four additional anomalous contributions are
\begin{align*}
    &+P_{N+j+1,j,j+1,N+j} (\omega) - P_{N+j,j,j+1,N+j+1} (\omega) \\
    &+P_{N+j,j+1,j,N+j+1} (\omega) - P_{N+j+1,j+1,j,N+j} (\omega) \,.
\end{align*}
Due to the local conservation law, $\mathbb{P}(\phi, \omega)$ is uniform $\forall j \in \mathbb{N}$ and thus we set $j$ as the first site of $\mathbb{N}$ in the calculation.}.
Nevertheless, the integral $P_{ijkl} (\omega)$ is shared by both systems and possesses an analytical expression at $T=0$:
\begin{align}
    \label{Eqn:Pintegral}
    &P_{ijkl} (\omega)
    = \sum_{nm} \frac{p^{nm-}_{ijkl} + p^{n\tilde{m}+}_{ijkl}+ p^{\tilde{n}m-}_{ijkl}+ p^{\tilde{n}\tilde{m}+}_{ijkl}}{4} \,, \\
    &p^{nm\pm}_{ijkl} \equiv - \frac{1}{\pi} 
    \psi^\textsc{r}_{ni} \psi^\textsc{l*}_{nj} \psi^\textsc{r}_{mk} \psi^\textsc{l*}_{ml} \frac{f_{\rm eff} (\pm \varepsilon_n) - f_{\rm eff} (\pm \varepsilon_m \mp \omega)}{(\pm \varepsilon_n) - (\pm \varepsilon_m \mp \omega)} \,, \nonumber
\end{align}
where the tilde over $m$ conjugates the $m$-eigenvalue and exchange L $ \leftrightarrow$ R on the $m$-biorthogonal wavefunctions. Eq.~\eqref{Eqn:Pintegral} embeds $f_{\rm eff} (\varepsilon)$ and reduces to the Hermitian case $P_{ijkl}(\omega) = \sum_{nm} \psi_{ni}\psi^{*}_{nj}\psi_{mk}\psi^{*}_{ml} \Theta( - \epsilon_n) \delta(\omega + \epsilon_n - \epsilon_m)$ in the decoupled limit $\kappa \rightarrow 0$. In NH systems, $\mathrm{Im}\, \Pi (\phi, \omega)$ will peak at level transitions $\omega = \mathrm{Re}\, \varepsilon_m - \mathrm{Re}\, \varepsilon_n$ with a larger linewidth due to a finite $\mathrm{Im}\, \varepsilon_m$. As shown in Fig.~\ref{Fig:Susceptibility}, this broadened effect is more evident when transitions between levels encounter EPs. As $\kappa/t$ increases, these peaks will be significantly enhanced and accumulate towards the regions between EPs. 
Our results agree with the full exact diagonalization \cite{Note0} and are consistent with the Lindblad formalism \cite{Minganti2019Quantum}: (i) the effect of EPs cannot be observed in the steady state (including equilibrium); (ii) any manifestation of an EP has a dynamical nature.

\begin{figure}[b]
    \centering
    \includegraphics[width=\linewidth]{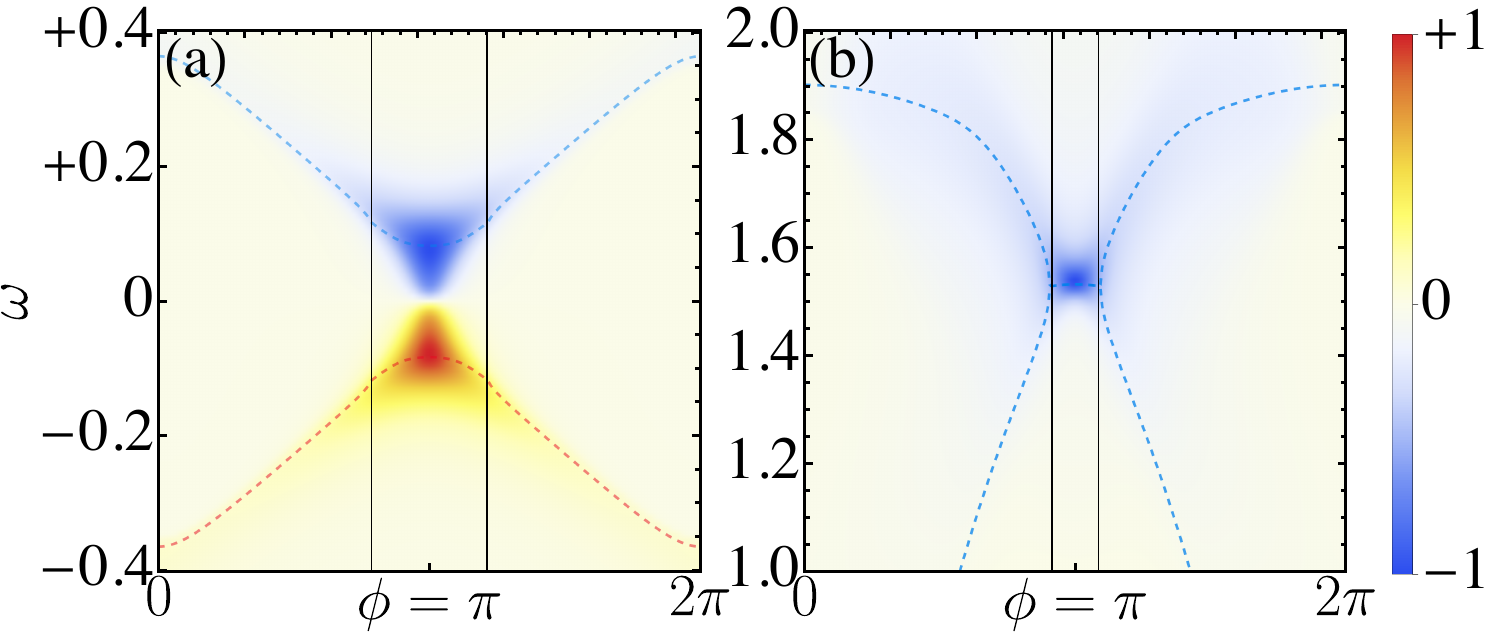}	
    \caption{Normalized imaginary part of current susceptibility $\Pi (\phi, \omega)$ of NH systems. (a)  $\mathrm{Im}\,\Pi (\phi, \omega)$ for the SNS junction, incorporating negative $\omega$ to reflect the particle-hole symmetry. (b) $\mathrm{Im}\,\Pi (\phi, \omega)$ for the normal mesoscopic ring. Peaks of $\mathrm{Im}\,\Pi (\phi, \omega)$ correspond to energy level transitions (dashed lines) and are concentrated between two EPs (black lines). The parameters are the same as in Fig.~\ref{Fig:SketchSpectrum}.}
    \label{Fig:Susceptibility}
\end{figure}

\emph{Conclusions and outlook.---}In this Letter, we identified an effective distribution 
that captures the quantum many-body observables of NH fermionic systems in equilibrium. 
This distribution, derived microscopically from the biorthogonal Green's function, serves as an extension of the Fermi-Dirac distribution for NH systems. 
We utilized this formalism in the context of quantum transport and derived an analytical equation %
for the persistent current flowing in SNS junctions and normal mesoscopic rings connected to reservoirs.
We demonstrated that there are no anomalies in the persistent currents near EPs, showing that their amplitudes are suppressed by thermal fluctuations and many-body interactions. 
Our findings have been validated through exact diagonalization with excellent agreement. We conclude that the signatures of EPs are only discernible in a dynamical quantity---the current susceptibility---rather than a static observable.

Our formalism extends beyond quantum persistent current transport and holds promise for broader applications. It can be adapted to systems such as multi-Josephson junctions \cite{Riwar2016Multiterminal,Pankratova2020Multiterminal,Coraiola2023Phaseengineering} and quantum spin chains \cite{Shen2021Theory}, potentially unveiling new insights into their topological and entanglement characteristics. 
Furthermore, generalizing this formalism to encompass non-equilibrium scenarios, such as quantum pumps \cite{Moskalets2002Floquet,Blaauboer2002Charge,Becerra2023Quantized}, shows great potential.

\emph{Note added.---}Our formalism fully agrees with the scattering matrix theory \cite{Beenakker2024Josephson} and can be applied to calculating additional thermodynamic quantities \cite{Pino2024Thermodynamics}.

\emph{Acknowledgments.---}We thank C.~W.~J. Beenakker, D.-L. Deng, W. Brzezicki, W.-T. Xue, L.-W. Yu, S.-Y. Zhang, W. Li, and Z. Liu for helpful discussions, and H.-R. Wang, in particular, for his valuable feedback from reading the first version of this manuscript. This work is supported by the Foundation for Polish Science project MagTop (No.~FENG.02.01-IP.05-0028/23) co-financed by the European Union from the funds of Priority 2 of the European Funds for a Smart Economy Program 2021–2027 (FENG) and by the National Science Centre (Poland) OPUS Grant No.~2021/41/B/ST3/04475. P.-X.S. and Z.L. acknowledge support from the Tsinghua University Dushi Program and Shanghai Qi Zhi Institute. J.L.L. acknowledges the computational resources provided by the Aalto Science-IT project the financial support from the Academy of Finland Projects No.~331342 and No.~358088. P.-X.S. acknowledges additional support from the European Union's Horizon Europe research and innovation programme under the Marie Skłodowska-Curie Grant Agreement No.~101180589 (SymPhysAI). Views and opinions expressed are however those of the author(s) only and do not necessarily reflect those of the European Union or the European Research Executive Agency. Neither the European Union nor the granting authority can be held responsible for them.

\footnotetext[0]{See Supplemental Material at [URL will be inserted by publisher] for: (i) Green's function and self-energy, (ii) properties of the NH Fermi-Dirac distribution, (iii) derivation of persistent current formula and its continuity at EPs, and (iv) extended numerical results on low-energy spectra, persistent currents, and current susceptibility, which includes Ref.~\cite{Shen2023Majoranamagnon}.}

\urlstyle{same}
\footnotetext[10]{The source code is available at \url{https://github.com/peixinshen/NonHermitianFermiDiracDistributionPersistentCurrent}}

\bibliography{refs}

\onecolumngrid
\clearpage

\setcounter{secnumdepth}{3}
\makeatletter
\xdef\presupfigures{\arabic{figure}}
\setcounter{equation}{0}
\setcounter{table}{0}
\setcounter{page}{1}
\renewcommand{\thefigure}{S\fpeval{\arabic{figure}-\presupfigures}}
\renewcommand{\theequation}{S\@arabic\c@equation}
\renewcommand{\thetable}{S\@arabic\c@table}
\renewcommand{\thepage}{S\@arabic\c@page}
\newtheorem{lemma}{Lemma}
\newtheorem{theorem}{Theorem}
\newtheorem{corollary}{Corollary}

\begin{center} 
	{\large \bf Supplemental Material: Non-Hermitian Fermi-Dirac Distribution in~Persistent~Current~Transport}
    \vskip 1em
    Pei-Xin Shen$^{1,2,\hyperlink{Email:Shen}{*}}$,
    Zhide Lu$^2$, Jose L. Lado$^3$, and 
    Mircea Trif$^{1,\hyperlink{Email:Trif}{\dagger}}$
    \vskip 0.5em
    \textit{
    $^1$International Research Centre MagTop, Institute of Physics,\\
    Polish Academy of Sciences, Aleja Lotnikow 32/46, PL-02668 Warsaw, Poland\\
    $^2$Institute for Interdisciplinary Information Sciences, Tsinghua University, Beijing 100084, China\\
    $^3$Department of Applied Physics, Aalto University, FI-00076 Aalto, Espoo, Finland
    }\\
    (Dated: \today)
\end{center}

In this Supplemental Material, we cover: (i) fundamental concepts of Green's function and self-energy; (ii) a detailed analysis of the non-Hermitian Fermi-Dirac distribution, including gauge invariance, asymptotic behaviors, and current continuity near EPs; (iii) methods for calculating the current susceptibility using non-Hermitian approaches and exact diagonalization; and (iv) extended numerical verifications, containing low-energy spectra and persistent currents, alongside performance benchmarks across both weak and strong coupling regimes.

\section{Effective Hamiltonian derivation}

The self-energy for normal metals has been pedagogically introduced in Ref.~\cite{Datta2005Quantum}. Here, we outline the main procedures for incorporating the self-energy into the BdG Hamiltonian and obtain the effective Hamiltonian of superconducting systems. 

\subsection{Bare Green's function of a fermionic reservoir}
The tight-binding Hamiltonian of the $N_\textsc{e}$-site 1D fermionic wire with open boundary conditions is given by:
\begin{align}
\label{Eqn:Reservoir}
   H_{\rm res} 
   &= \sum_{j=1}^{N_\textsc{e}} 
   \big( t c^\dagger_j c_{j+1} + t c^\dagger_{j+1} c_j + g c^\dagger_j c_j \big) 
   = \vec{C}^\dagger \mathcal{H}_{\rm res} \vec{C}  \,, \quad \text{with} \quad
   \mathcal{H}_{\rm res} 
   = \sum_{j=1}^{N_\textsc{e}} \big( t \ket{j}\bra{j+1} + t \ket{j+1}\bra{j} + g \ket{j}\bra{j} \big) \,,
\end{align}
where calligraphy is specifically used to represent the first quantized operators, and $\ket{j}$ denotes the single-particle state at the $j$-th site. The exact single-particle eigenvalues and normalized eigenstates of $\mathcal{H}_{\rm res}$ are as follows:
\begin{align}
    \epsilon_k = 2t \cos(k) + g \,, \quad
    \braket{j|\psi_k} = \psi_{kj} 
    = \sqrt{2/(N_\textsc{e}+1)} \sin(kj) \,, \quad
    k \in \{ n \pi /(N_\textsc{e}+1)  \}_{n=1}^{N_\textsc{e}} \,.
    \end{align}
With these quantities, we can calculate its bare Green's function $g_{\rm res} = 1/(\omega - \mathcal{H}_{\rm res})$ in the thermodynamic limit $N_\textsc{e} \rightarrow \infty$,
\begin{align}
    &g_{\rm res}(l,m; \omega) 
    = \sum_k \frac{\braket{l|\psi_k}\braket{\psi_k|m}}{\omega-\epsilon_k} 
    = \frac{1}{\pi} \sum_k \frac{\pi}{N_\textsc{e}+1} \frac{\sqrt{2}\sin(kl) \sqrt{2}\sin(km)}{\omega-\epsilon_k} 
    = \frac{1}{\pi} \int_{0}^{\pi} \mathrm{d} k \frac{2\sin(k l) \sin(k m)}{\omega - [2t \cos(k) + g]} \nonumber \\
    &= \frac{1}{2 \pi} \int_{-\pi}^{\pi} \mathrm{d} k \frac{\mathrm{e}^{+\mathrm{i} k|l-m|} - \mathrm{e}^{+\mathrm{i} k|l+m|}}{\omega - [2t \cos( k) + g]} 
    =\begin{cases} 
    \displaystyle
    \frac{(-x - \sqrt{x^2-1})^{|l-m|} - (-x - \sqrt{x^2-1})^{|l+m|}}{+2t\sqrt{x^2-1}} \,, \quad x = \frac{g - \omega}{2t} < -1 \,. \\
    \displaystyle
    \frac{(-x + \sqrt{x^2-1})^{|l-m|} - (-x + \sqrt{x^2-1})^{|l+m|}}{-2t\sqrt{x^2-1}} \,, \quad x = \frac{g - \omega}{2t} > +1 \,. 
    \end{cases} 
\end{align}
When $\omega$ lies within the bandwidth $-1 \leqslant x \leqslant 1$, we define the retarded ($+$) and advanced ($-$) Green's function:
\begin{align}
    \label{Eqn:ReservoirGFnormal}
    g_{\rm res}^\pm(l,m; \omega) 
    &= \frac{(-x \pm \mathrm{i} \sqrt{1-x^2})^{|l-m|} - (-x \pm \mathrm{i} \sqrt{1-x^2})^{|l+m|}}{\pm 2\mathrm{i}  |t| \sqrt{1-x^2}}
    = \frac{\mathrm{e}^{\pm \mathrm{i} k |l-m|} - \mathrm{e}^{\pm \mathrm{i} k |l+m|}}{\pm 2\mathrm{i} |t| \sin(k)} \,, 
\end{align}
where we use $t<0$, $x = -\cos(k)$ and $-x \pm \mathrm{i} \sqrt{1-x^2} = \mathrm{e}^{\pm \mathrm{i} k}$. 
Next, we focus on the edge Green's function $l = m = 1$,
\begin{align}
\label{Eqn:SurfaceGreenFunction}
    g_\mathrm{edge}^\pm(\omega) \equiv
    g_{\rm res}^\pm(1,1; \omega) 
    &= \mathrm{e}^{\pm \mathrm{i} k}/t
    = \big[ (\omega - g)/2 \mp \mathrm{i} \sqrt{t^2 - [(\omega - g)/2]^2} \big]/t^2 \,,
\end{align}
which plays a crucial role in determining the self-energy of the reservoir connected to the system through a point contact.

\subsection{Self-energy induced by a fermionic reservoir}

Here we derive the self-energy for the $N$-site superconducting system, arising from its interaction with the normal reservoir. To ease the discussion, we assume that the reservoir is coupled to the right end of the system via $H_{\rm tun} = \kappa (c^\dagger_\textsc{n} c_\textsc{n+1} + c^\dagger_\textsc{n+1} c_\textsc{n})$, and rewrite the Hamiltonian of the full system $H_{\rm tot} = \vec{C}^\dagger \mathcal{H}_{\rm tot} \vec{C}/2$ in the basis 
$\vec{C} = (c_1, c^\dagger_1, \dots, c_\textsc{n}, c^\dagger_\textsc{n}, c_\textsc{n+1}, c^\dagger_\textsc{n+1}, \dots, c_\textsc{n+n$_\textsc{e}$}, c^\dagger_\textsc{n+n$_\textsc{e}$})^\textsc{t}$
with the first-quantized Hamiltonian $\mathcal{H}_{\rm tot} = \mathcal{H}_{\rm sys} + \tilde{\mathcal{H}}_{\rm res} + \tilde{\mathcal{H}}_{\rm tun}$. We denote the extended BdG Hamiltonian with a tilde, e.g., $\tilde{\mathcal{H}}_{\rm res} = \mathcal{H}_{\rm res} \otimes \tau_z$ is a $2N_\textsc{e} \times 2 N_\textsc{e}$ matrix as an extension of Eq.~\eqref{Eqn:Reservoir} with the index $j$ running from $N+1$ to $N+N_\textsc{e}$, where $\tau_z$ is the Pauli-$Z$ matrix acting in the particle-hole space and
\begin{align}
    \label{Eqn:TunnelingV}
    \tilde{\mathcal{H}}_{\rm tun}
    = \mathcal{H}_{\rm tun} \otimes \tau_z
    = \big( \kappa \ket{N}\bra{N+1} + \kappa \ket{N+1}\bra{N} \big) \otimes \tau_z =
    \begin{bmatrix}
        0 & \mathcal{V} \\
        \mathcal{V}^\dagger & 0
    \end{bmatrix}\,, \quad
    \mathcal{V} = 
    \begin{bmatrix}
        0 & \cdots & 0\\
        \vdots & \ddots & \vdots \\
        \kappa & \cdots & 0 
    \end{bmatrix} 
    \otimes \tau_z \,.
\end{align}
Given the bare Green's function of the system $g_{\rm sys} = 1/(\omega - \mathcal{H}_{\rm sys})$ and the bare Green's function of the BdG-extended reservoir $\tilde{g}_{\rm res} = 1/(\omega - \tilde{\mathcal{H}}_{\rm res})$, we compute the Green's function $G_{\rm tot}$ of the full system by Dyson equation,
\begin{equation}
    G_{\rm tot} = g_{\rm tot} + g_{\rm tot} \tilde{\mathcal{H}}_{\rm tun} G_{\rm tot} \quad \Longleftrightarrow \quad
    \begin{bmatrix}
        G_{\rm sys}   & G_{12} \\
        G_{21} & G_{\rm res}
    \end{bmatrix} = 
    \begin{bmatrix}
        g_{\rm sys}   & 0 \\
        0 & \tilde{g}_{\rm res}
    \end{bmatrix} + 
    \begin{bmatrix}
        g_{\rm sys}   & 0 \\
        0 & \tilde{g}_{\rm res}
    \end{bmatrix}
    \begin{bmatrix}
        0 & \mathcal{V}\\
        \mathcal{V}^\dagger & 0
    \end{bmatrix}
    \begin{bmatrix}
        G_{\rm sys}   & G_{12} \\
        G_{21} & G_{\rm res}
    \end{bmatrix} \,.
\end{equation}
By eliminating $G_{12}$ and $G_{21}$, we have $G_{\rm sys} = g_{\rm sys} + g_{\rm sys} (\mathcal{V} \tilde{g}_{\rm res}\mathcal{V}^\dagger) G_{\rm sys} \equiv g_{\rm sys} + g_{\rm sys} \Sigma(\omega) G_{\rm sys}$ and thus obtain
\begin{align}
    G_{\rm sys}(\omega) = \frac{1}{1/g_{\rm sys} - \Sigma(\omega)} = \frac{1}{\omega- [\mathcal{H}_{\rm sys} + \Sigma(\omega)]} \equiv \frac{1}{\omega- \mathcal{H}_{\rm eff}(\omega)} \,.
    \label{Eqn:GreenEffetiveHamiltonian}
\end{align}
To obtain the explicit form of the retarded self-energy $\Sigma(\omega) = \mathcal{V} \tilde{g}^+_{\rm res} (\omega) \mathcal{V}^\dagger$, it is more convenient to represent the matrices back to the basis $\vec{C} = (c_1, \dots, c_\textsc{n}, c_\textsc{n+1}, \dots, c_\textsc{n+n$_\textsc{e}$}, c^\dagger_1, \dots, c^\dagger_\textsc{n}, c^\dagger_\textsc{n+1}, \dots, c^\dagger_\textsc{n+n$_\textsc{e}$})^\textsc{t}$, which involves exchanging the particle-hole space and the lattice space, e.g., $\tilde{\mathcal{H}}_{\rm res} = \tau_z \otimes \mathcal{H}_{\rm res}$. 
As such, we can express the retarded Green's function of the BdG-extended reservoir with $g^\pm_{\rm res}$ as
\begin{equation}
    \tilde{g}^+_{\rm res}(\omega) = \frac{1}{\omega + \mathrm{i} \eta -\tilde{\mathcal{H}}_{\rm res}} = 
    \begin{bmatrix} \displaystyle
        +1/(+ \omega + \mathrm{i} \eta - \mathcal{H}_{\rm res}) & 0 \\
        0 & -1/(- \omega - \mathrm{i} \eta - \mathcal{H}_{\rm res})
    \end{bmatrix} 
    = 
    \begin{bmatrix}
        +g^+_{\rm res}(+\omega) & 0 \\
        0 & - g^-_{\rm res}(-\omega)
    \end{bmatrix} \,,
\end{equation}
where $\eta$ is a positive infinitesimal linewidth. Using Eqs.~\eqref{Eqn:SurfaceGreenFunction} and \eqref{Eqn:TunnelingV}, the self-energy is given by
\begin{align}
    \Sigma(\omega)
    &= \kappa^2
    \begin{bmatrix}
        +g^+_\mathrm{edge}(+\omega) & 0 \\
        0 & -g^-_\mathrm{edge}(-\omega) 
    \end{bmatrix} \otimes \ket{N}\bra{N}
    = 
    \frac{\kappa^2}{t^2}
    \left[ \frac{\omega - g \tau_z}{2} - \mathrm{i} \sqrt{t^2 - \left(\frac{\omega - g \tau_z}{2} \right)^2} \right] \otimes \ket{N}\bra{N} \,.
    \label{Eqn:SelfEnergy}
\end{align}
The expression above represents the self-energy for the right-edge coupling. By replacing the lattice index $N$ with $l$ in Eq.~\eqref{Eqn:SelfEnergy}, one can similarly derive the self-energy $\Sigma_l(\omega)$ for a system whose $l$-site is connected to the reservoir.
The self-energy $\Sigma(\omega)$ takes care of the couplings and contains all essential information about the external reservoir. Hence, as shown in Eq.~\eqref{Eqn:GreenEffetiveHamiltonian}, $\mathcal{H}_{\rm eff}(\omega) = \mathcal{H}_{\rm sys} + \Sigma(\omega)$ effectively governs the dynamics of the system coupled to the reservoir. 
So far, we have not made any approximation except for $N_\textsc{e} \rightarrow \infty$. 
Since we are interested in the low-energy sector, we adopt the wide-band limit $\Sigma(\omega) \approx \Sigma(0)$
and interpret $\mathcal{H}_{\rm eff} \equiv \mathcal{H}_{\rm eff}(0)$ as an effective Hamiltonian for the system \cite{Landi2022Nonequilibrium}. 
To this end, the self-energy for the superconducting system is $\Sigma_l(0) = - \kappa^2/t^2 (\tau_z g /2 + \mathrm{i} \sqrt{t^2 - (g/2)^2} ) \otimes \ket{l}\bra{l}$, which also applies to normal metallic systems when $\tau_z$ is removed. The effective Hamiltonian of the superconducting system possesses a particle-hole symmetry $\mathcal{H}_{\rm eff} = - \tau_x \mathcal{H}^*_{\rm eff} \tau_x$, which is inherited from the self-energy $\Sigma(0) = - \tau_x \Sigma^*(0) \tau_x$, where $\tau_x$ is the Pauli-$X$ acting in the particle-hole space. This symmetry ensures that its complex spectra consist of pairs $(+\varepsilon_n, - \varepsilon^*_n)$, as illustrated in Fig.~\figref{Fig:SketchSpectrum}{c} and Fig.~\figref{Fig:SpectraComparison}{a}. 

\section{Non-Hermitian Fermi-Dirac distribution}

\subsection{Derivation}

To compute the correlator $\braket{c^\dagger_i c_j} = \int \braket{j|\rho(\omega)|i} f_\textsc{fd}(\omega) \mathrm{d} \omega$, the conventional way is to decompose $\rho(\omega)$ into two parts:
\begin{align}
\rho (\omega)
    &= \frac{\rm i}{2 \pi} \sum_n \left[\frac{\ket{\psi^\textsc{r}_n}\bra{\psi^\textsc{l}_n}}{\omega - \varepsilon_n} - \frac{\ket{\psi^\textsc{l}_n}\bra{\psi^\textsc{r}_n}}{\omega - \varepsilon_n^*} \right] 
    = \frac{\rm i}{2 \pi} \sum_n \left[\frac{\ket{\psi^\textsc{r}_n}\bra{\psi^\textsc{l}_n}}{\omega - (\mathrm{Re}\,\varepsilon_n + \mathrm{i}\, \mathrm{Im}\,\varepsilon_n)} - \frac{\ket{\psi^\textsc{l}_n}\bra{\psi^\textsc{r}_n}}{\omega - (\mathrm{Re}\,\varepsilon_n - \mathrm{i}\, \mathrm{Im}\,\varepsilon_n)} \right] \\
    &= \frac{\rm i}{2 \pi} \sum_n \left[ (\ket{\psi^\textsc{r}_n}\bra{\psi^\textsc{l}_n} - \ket{\psi^\textsc{l}_n}\bra{\psi^\textsc{r}_n}) \frac{(\omega - \mathrm{Re}\,\varepsilon_n)}{(\omega - \mathrm{Re}\,\varepsilon_n)^2 + (\mathrm{Im}\,\varepsilon_n)^2} + (\ket{\psi^\textsc{r}_n}\bra{\psi^\textsc{l}_n} + \ket{\psi^\textsc{l}_n}\bra{\psi^\textsc{r}_n}) \frac{\mathrm{i}\,\mathrm{Im}\,\varepsilon_n}{(\omega - \mathrm{Re}\,\varepsilon_n)^2 + (\mathrm{Im}\,\varepsilon_n)^2} \right] \,. \nonumber
\end{align}
In the context of a Hermitian system where $\mathrm{Im}\, \varepsilon_n \rightarrow 0$, the first part contributes to a principal value (PV) while the second part becomes a Lorentzian. In the weak coupling limit $\ket{\psi^\textsc{l}_n} \approx \ket{\psi^\textsc{r}_n}$, the PV term is close to zero and thus is often omitted in the literature \cite{Rickayzen1980Green}. However, we have found that this term is crucial for correctly evaluating observables at EPs, which are typically encountered in the strong-coupling regime. As such, we choose to calculate the correlator by analyzing the branch cut of $\ln (\varepsilon)$ along the negative real axis. Before delving into the technical details, we first introduce the following useful identities:
\begin{align}
    \label{Eqn:LogIdentities}
    \lim_{\eta \rightarrow 0} \ln (\epsilon \pm \mathrm{i} \eta) = \ln |\epsilon| \pm \mathrm{i} \pi \Theta(-\epsilon) \,, \quad
    \lim_{\omega \rightarrow -\infty} \ln (\omega + \varepsilon_n) = \mathfrak{C} + \mathrm{i} \pi\, \mathrm{sgn}(\mathrm{Im} \, \varepsilon_n) \,, \quad
    \sum_n \big( \psi^\textsc{l*}_{ni}\psi^\textsc{r}_{nj} \mathfrak{C} \big)
    = \delta_{ij}  \mathfrak{C} \,,
\end{align}
where $\varepsilon_n \in \mathbb{C}$, $\epsilon \in \mathbb{R}$, $\eta > 0$, and $\mathfrak{C}$ represents any $\varepsilon_n$-independent constant. This includes the scenario where $\mathfrak{C}$ grows logarithmically towards infinity, rendering any initial dependence on $\varepsilon_n$ negligible in that limit. Consequently, we obtain
\begin{align}
\label{Eqn:CorrelatorZero}
\braket{c^\dagger_i c_j} 
    &= \frac{\rm i}{2 \pi} \int_{-\infty}^0 \sum_n \left(\frac{\psi^\textsc{l*}_{ni}\psi^\textsc{r}_{nj}}{\omega - \varepsilon_n} - \frac{\psi^\textsc{r*}_{ni}\psi^\textsc{l}_{nj}}{\omega - \varepsilon_n^*} \right) \mathrm{d} \omega
    = \frac{\mathrm{i}}{2\pi} \sum_n \left[ \psi^\textsc{l*}_{ni}\psi^\textsc{r}_{nj} \ln (\omega - \varepsilon_n) - \psi^\textsc{r*}_{ni}\psi^\textsc{l}_{nj} \ln (\omega - \varepsilon^*_n)  \right] \big|_{-\infty}^0 \nonumber \\
    &= \frac{\mathrm{i}}{2\pi} \sum_n \left[ \psi^\textsc{l*}_{ni}\psi^\textsc{r}_{nj} \Big(\ln (- \varepsilon_n) - (\mathfrak{C} +\mathrm{i} \pi) \Big) - \psi^\textsc{r*}_{ni}\psi^\textsc{l}_{nj} \Big( \ln (- \varepsilon^*_n) - (\mathfrak{C} -\mathrm{i} \pi) \Big) \right] \nonumber \\
    &= \frac{\mathrm{i}}{2\pi} \sum_n \left[ \psi^\textsc{l*}_{ni}\psi^\textsc{r}_{nj} \Big( \ln|\varepsilon_n| + \mathrm{i} (\arg \varepsilon_n + \pi) - (+\mathrm{i} \pi) \Big) - \psi^\textsc{r*}_{ni}\psi^\textsc{l}_{nj} \Big( \ln|\varepsilon_n| - \mathrm{i} (\arg \varepsilon_n + \pi) - (-\mathrm{i} \pi) \Big)  \right] \nonumber \\
    &= \frac{\mathrm{i}}{2\pi} \sum_n \left(\psi^\textsc{l*}_{ni}\psi^\textsc{r}_{nj} \ln \varepsilon_n - \psi^\textsc{r*}_{ni}\psi^\textsc{l}_{nj} \ln \varepsilon_n^* \right)
    \equiv \frac{1}{2 \mathrm{i}} \sum_n \left[\psi^\textsc{l*}_{ni}\psi^\textsc{r}_{nj} f_{\rm eff}(\varepsilon_n) - \psi^\textsc{r*}_{ni}\psi^\textsc{l}_{nj} f^*_{\rm eff}(\varepsilon_n) \right] \,,
\end{align}
where we use the identities in Eq.~\eqref{Eqn:LogIdentities} to cancel out $\mathfrak{C}$.
For $T \neq 0$, we
can perform the integrals using the residue theorem:
\begin{align}
    \label{Eqn:CorrelatorFinite}
    \braket{c^\dagger_i c_j} 
    &= \frac{\rm i}{2 \pi} \int_{-\infty}^\infty \frac{1}{\mathrm{e}^{\beta \omega}+1} \sum_n \left(\frac{\psi^\textsc{l*}_{ni}\psi^\textsc{r}_{nj}}{\omega - \varepsilon_n} - \frac{\psi^\textsc{r*}_{ni}\psi^\textsc{l}_{nj}}{\omega - \varepsilon_n^*} \right) \mathrm{d} \omega \nonumber \\
    &= \frac{\mathrm{i}}{2\pi} \sum_n \Bigg\{\psi^\textsc{l*}_{ni}\psi^\textsc{r}_{nj} \frac{1}{2} \left[\frac{-2\pi \mathrm{i}}{\mathrm{e}^{\beta \varepsilon_n}+1} - \sum^\infty_{m = 1} \left(m-\frac{1}{2}+\frac{\mathrm{i} \beta \varepsilon_n}{2\pi}\right)^{-1} - \sum^\infty_{m = 1} \left(m-\frac{1}{2}-\frac{\mathrm{i} \beta \varepsilon_n}{2\pi}\right)^{-1} \right] \nonumber \\
    &\qquad \quad \;\;\;\, - \psi^\textsc{r*}_{ni}\psi^\textsc{l}_{nj} \frac{1}{2} \left[\frac{+2\pi \mathrm{i}}{\mathrm{e}^{\beta \varepsilon_n^*}+1} - \sum^\infty_{m = 1} \left(m-\frac{1}{2}+\frac{\mathrm{i} \beta \varepsilon_n^*}{2\pi}\right)^{-1} - \sum^\infty_{m = 1} \left(m-\frac{1}{2}-\frac{\mathrm{i} \beta \varepsilon^*_n}{2\pi}\right)^{-1} \right] \Bigg\} \nonumber \\
    &= \frac{\mathrm{i}}{2\pi} \sum_n \left\{ \psi^\textsc{l*}_{ni}\psi^\textsc{r}_{nj} \left[\Psi \left(\frac{1}{2}+\frac{\mathrm{i} \beta \varepsilon_n}{2\pi}\right) - \frac{\mathrm{i} \pi}{2} \right] - \psi^\textsc{r*}_{ni}\psi^\textsc{l}_{nj} \left[\Psi \left(\frac{1}{2}+\frac{\mathrm{i} \beta \varepsilon_n}{2\pi}\right) - \frac{\mathrm{i} \pi}{2} \right]^* \right\} \nonumber \\
    &\equiv \frac{1}{2 \mathrm{i}} \sum_n \left[\psi^\textsc{l*}_{ni}\psi^\textsc{r}_{nj} f_{\rm eff}(\varepsilon_n, \beta) - \psi^\textsc{r*}_{ni}\psi^\textsc{l}_{nj} f^*_{\rm eff}(\varepsilon_n, \beta) \right] \,.
\end{align}
Above, we use the following identities to simplify the sum of residue at $\{(m - 1/2) 2 \pi \mathrm{i} / \beta \}_{m \in \mathbb{Z}}$ to the digamma function $\Psi$ \cite{Weisstein2024Digamma}: 
\begin{align}
    \label{Eqn:GammaIdentities}
    \sum^M_{m=1} \frac{1}{m+z-1} = \Psi(M+z) - \Psi(z) \,, \, z \in \mathbb{C} \, 
    &\quad \Longrightarrow \quad 
    \sum^\infty_{m = 1} \left(m-\frac{1}{2} \pm \frac{\mathrm{i} \beta \varepsilon}{2\pi}\right)^{-1} = \mathfrak{C} - \Psi \left(\frac{1}{2} \pm \frac{\mathrm{i} \beta \varepsilon}{2\pi} \right) \,, \nonumber \\
    \Psi(1 - z) - \Psi(z) = \pi \cot (\pi z) \,, \, z \in \mathbb{C} \,
    &\quad \Longrightarrow \quad
    \Psi \left(\frac{1}{2} - \frac{\mathrm{i} \beta \varepsilon}{2\pi} \right) - \Psi \left(\frac{1}{2} + \frac{\mathrm{i} \beta \varepsilon}{2\pi} \right) = \frac{\pi \mathrm{i}}{\mathrm{e}^{+\beta \varepsilon}+1} - \frac{\pi \mathrm{i}}{\mathrm{e}^{-\beta \varepsilon}+1} \,, 
\end{align}
and utilize identities in Eq.~\eqref{Eqn:LogIdentities} to eliminate $\mathfrak{C}$ and finally obtain Eq.~\eqref{Eqn:CorrelatorFinite} with the NH Fermi-Dirac distribution $f_{\rm eff}(\varepsilon, \beta)$. It is evident that the biorthogonality of the wavefunctions is essential to cancel out the divergence in the integral. The anomalous correlator $\braket{c_i c_j} = \int \braket{j|\rho(\omega)|N+i} f_\textsc{fd}(\omega) \mathrm{d} \omega$ can be obtained similarly by replacing $i$ with $N+i$ in Eqs.~\eqref{Eqn:CorrelatorZero}-\eqref{Eqn:CorrelatorFinite}. Therefore, a general quadratic Hermitian operator $O = \vec{C}^\dagger \mathcal{O} \vec{C}/\mathbf{2}$ can be computed by
\begin{align}
\braket{O}
= \frac{1}{2\mathrm{i}} \sum_n \left[ \frac{\braket{\psi^\textsc{l}_n|\mathcal{O}|\psi^\textsc{r}_n}}{\mathbf{2}} f_{\rm eff}(\varepsilon_n) - \frac{\braket{\psi^\textsc{r}_n|\mathcal{O}|\psi^\textsc{l}_n}}{\mathbf{2}} f^*_{\rm eff}(\varepsilon_n) \right] 
= \mathrm{Im} \sum_n \frac{\braket{\psi^\textsc{l}_n|\mathcal{O}|\psi^\textsc{r}_n}}{\mathbf{2}} f_{\rm eff}(\varepsilon_n)
= \frac{\mathrm{Im} \mathrm{Tr} [ \mathcal{O} f_{\rm eff}(\mathcal{H}_{\rm eff})]}{\mathbf{2}} \,,
\end{align}
while other non-quadratic operators can be calculated using Wick's theorem.

\subsection{Gauge invariance}

The NH Fermi-Dirac distribution $f_{\rm eff}$ acquires extra degrees of freedom owing to its association with a set of biorthogonal single-particle eigenstates. To examine its gauge invariance, we start with a Lemma for the correlators in Eqs.~\eqref{Eqn:CorrelatorZero}-\eqref{Eqn:CorrelatorFinite}:
\begin{lemma}\label{Thm:Correlators} 
    Given two $\varepsilon_n$-independent complex numbers $C_{1}$ and $C_{2}$, applying two separate transformations on the NH Fermi-Dirac distribution $f_{\rm eff} \rightarrow f_{\rm eff} + C_1$ associated with $\psi^\mathrm{L*}_{ni}\psi^\mathrm{R}_{nj}$, and $f^*_{\rm eff} \rightarrow [f_{\rm eff} + C_2]^*$ associated with $\psi^\mathrm{R*}_{ni}\psi^\mathrm{L}_{nj}$ will keep $\braket{c_i c_j}$ unchanged while transforming
    \begin{align}
    \label{Eqn:CorrelatorGauge}
    \braket{c^\dagger_i c_j}
    \rightarrow \braket{c^\dagger_i c_j} + \frac{1}{2\mathrm{i}} \delta_{ij} (C_1 - C^*_2) \,. 
    \end{align}
\end{lemma}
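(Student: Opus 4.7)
The plan is to exploit the linearity of the correlator expressions in Eqs.~\eqref{Eqn:CorrelatorZero} and \eqref{Eqn:CorrelatorFinite} with respect to $f_{\rm eff}$, so that the constant shifts $C_1$ and $C_2$ factor out of the sum over $n$ and reduce to biorthogonal completeness sums. First, I would substitute $f_{\rm eff}(\varepsilon_n) \to f_{\rm eff}(\varepsilon_n) + C_1$ and $f^*_{\rm eff}(\varepsilon_n) \to f^*_{\rm eff}(\varepsilon_n) + C_2^*$ into the explicit formula for $\braket{c^\dagger_i c_j}$ to obtain
\begin{equation*}
\braket{c^\dagger_i c_j} \to \braket{c^\dagger_i c_j} + \frac{1}{2\mathrm{i}} \sum_n \bigl[ \psi^\textsc{l*}_{ni}\psi^\textsc{r}_{nj}\, C_1 - \psi^\textsc{r*}_{ni}\psi^\textsc{l}_{nj}\, C_2^* \bigr],
\end{equation*}
and write the analogous expression for $\braket{c_i c_j}$ using the replacement $i \to N+i$ indicated in the main text.

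The next step is to invoke biorthogonal completeness of the single-particle eigenstates, $\sum_n \ket{\psi^\textsc{r}_n}\bra{\psi^\textsc{l}_n} = \mathbb{1} = \sum_n \ket{\psi^\textsc{l}_n}\bra{\psi^\textsc{r}_n}$, which upon projection onto sites $i$ and $j$ yields
\begin{equation*}
\sum_n \psi^\textsc{l*}_{ni}\psi^\textsc{r}_{nj} = \delta_{ij} = \sum_n \psi^\textsc{r*}_{ni}\psi^\textsc{l}_{nj}.
\end{equation*}
This is precisely the identity invoked in the last relation of Eq.~\eqref{Eqn:LogIdentities} when cancelling the $\varepsilon_n$-independent constant $\mathfrak{C}$, so the argument is internally consistent with the derivation of $f_{\rm eff}$ itself.

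Feeding these delta functions back in, the increment to $\braket{c^\dagger_i c_j}$ collapses to $(C_1 - C_2^*)\delta_{ij}/(2\mathrm{i})$, which is exactly the claimed shift. For $\braket{c_i c_j}$, the completeness sums instead yield $\delta_{N+i,\,j}$; since the BdG basis assigns $i,j \in \{1,\dots,N\}$ to the particle sector and $N+i \in \{N+1,\dots,2N\}$ to the hole sector, one has $N+i > N \geqslant j$ and the Kronecker delta vanishes identically, leaving $\braket{c_i c_j}$ invariant.

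The main conceptual point, rather than any technical obstacle, is recognizing that the invariance of $\braket{c_i c_j}$ is not accidental but a consequence of the particle-hole block structure combined with biorthogonal completeness: because $C_1$ and $C_2$ carry no $\varepsilon_n$-dependence, their effect is confined to diagonal contributions in the $\vec{C}$ basis. As a corollary, the expectation value of any Hermitian quadratic $O = \vec{C}^\dagger \mathcal{O} \vec{C}/\mathbf{2}$ with $\mathrm{Tr}\,\mathcal{O}=0$---in particular the current operator $\mathcal{J}$ entering Eq.~\eqref{Eqn:CurrentImTrLog}---is manifestly invariant under this gauge.
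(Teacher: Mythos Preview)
Your proposal is correct and follows exactly the route the paper indicates: the Lemma is stated to follow directly from the biorthogonal completeness identity in Eq.~\eqref{Eqn:LogIdentities}, and you have simply made that one-line justification explicit by substituting the shifts, collapsing the $n$-sums to Kronecker deltas, and noting that $\delta_{N+i,j}=0$ for the anomalous correlator because particle and hole indices never coincide. Your concluding remark about traceless operators anticipates the paper's Corollary~\ref{Thm:TracelessOperator}.
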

\noindent Lemma~\ref{Thm:Correlators} naturally follows from the biorthogonality of wavefunctions in Eq.~\eqref{Eqn:LogIdentities} and contributes the following theorem:
\begin{theorem}\label{Thm:GeneralOperator} 
    Observables are invariant under the gauge transformation on the NH Fermi-Dirac distribution $f_{\rm eff} \rightarrow f_{\rm eff} + \mathbb{R}$. 
\end{theorem}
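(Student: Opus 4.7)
The plan is to reduce Theorem 1 directly to Lemma 1 by a careful choice of the constants $C_1, C_2$. The gauge transformation $f_{\rm eff}\to f_{\rm eff}+r$ with $r\in\mathbb{R}$ is to be viewed as a single transformation that shifts both $f_{\rm eff}$ and its complex conjugate consistently. Applying the Lemma demands that we specify how $f^*_{\rm eff}$ transforms: since the shift $r$ is real, the conjugate shift is again $r$, so we set $C_1=r$ and $C_2=r$, giving $C_1-C_2^*=r-r=0$.

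First, I would verify that $f_{\rm eff}\to f_{\rm eff}+r$ together with the induced conjugate transformation $f^*_{\rm eff}\to(f_{\rm eff}+r)^*$ falls within the hypotheses of Lemma 1, since $r$ is $\varepsilon_n$-independent. Substituting into Eq.~\eqref{Eqn:CorrelatorGauge} then shows immediately that $\braket{c^\dagger_i c_j}\to\braket{c^\dagger_i c_j}+\tfrac{1}{2\mathrm{i}}\delta_{ij}(r-r)=\braket{c^\dagger_i c_j}$, while the Lemma already guarantees $\braket{c_i c_j}$ is untouched. Hence both normal and anomalous two-point functions are invariant.

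Second, for a general quadratic Hermitian observable $O=\vec{C}^\dagger\mathcal{O}\vec{C}/\mathbf{2}$, its expectation value is a linear combination of $\braket{c^\dagger_i c_j}$ and $\braket{c_i c_j}$ (and their conjugates), so $\braket{O}$ is invariant. To extend to arbitrary many-body observables, I would invoke Wick's theorem: in the effective quasi-free (Gaussian) description induced by $\mathcal{H}_{\rm eff}$, higher correlators decompose into products of two-point functions, each of which has just been shown to be gauge invariant. Therefore every physical observable computed within this formalism is invariant under $f_{\rm eff}\to f_{\rm eff}+r$ with $r\in\mathbb{R}$.

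The main obstacle is not computational but conceptual: one must recognize that a purely imaginary shift is \emph{not} allowed, because then $C_1-C_2^*=2C_1\neq 0$ would produce an unphysical shift in the diagonal density $\braket{c^\dagger_i c_i}$. This is precisely why the gauge group is $\mathbb{R}$ rather than $\mathbb{C}$, and it must be explicitly noted in the proof to justify why the conjugate transformation uses the same real constant. A secondary subtlety is ensuring that the Wick decomposition is valid in the NH setting; this follows because $\mathcal{H}_{\rm eff}$ is quadratic in the fermionic operators, so the effective state remains Gaussian and Wick's theorem applies with the biorthogonal two-point functions derived in Eqs.~\eqref{Eqn:CorrelatorZero}--\eqref{Eqn:CorrelatorFinite}.
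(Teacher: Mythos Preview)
Your proposal is correct and follows essentially the same route as the paper: set $C_1=C_2=r$ in Lemma~\ref{Thm:Correlators}, use $r\in\mathbb{R}$ so that $C_1-C_2^*=0$, and conclude that all two-point functions, hence all quadratic observables, are invariant. The only substantive difference is that the paper keeps $C$ complex long enough to derive the intermediate identity $\braket{O}\to\braket{O}+\mathrm{Tr}(\mathcal{O})\,\mathrm{Im}(C)$ (their Eq.~\eqref{Eqn:TraceOGauge}), which it then reuses for Corollary~\ref{Thm:TracelessOperator}; your extension to non-quadratic observables via Wick's theorem is not in the paper's proof but is a natural and valid addition.
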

\begin{proof}[Proof]
    According to Eq.~\eqref{Eqn:CorrelatorGauge} in Lemma~\ref{Thm:Correlators}, the correlator $\braket{c^\dagger_i c_j}$ remains unchanged when $C_1 = C^*_2$. In general, $C_1$ is not required to be equal to $C_2$. However, this results in two NH Fermi-Dirac distributions---$f_{\rm eff} + C_1$ associated with $\psi^\textsc{l*}_{ni}\psi^\textsc{r}_{nj}$ and $[f_{\rm eff} + C_2]^*$ associated with $\psi^\textsc{r*}_{ni}\psi^\textsc{l}_{nj}$---within the same correlator. To this end, we set $C_1 = C_2 = C$ and find $\braket{c^\dagger_i c_j} \rightarrow \braket{c^\dagger_i c_j} + \delta_{ij} \mathrm{Im} (C)$, which directly leads to 
    \begin{align}
    \label{Eqn:TraceOGauge}
    \braket{O} \rightarrow \braket{O} + \mathrm{Tr} (\mathcal{O}) \, \mathrm{Im} (C) \,.
    \end{align}
    Therefore, the observable computed by a general quadratic Hermitian operator in Eq.~\eqref{Eqn:HermitianO} is gauge-invariant when $C \in \mathbb{R}$.
\end{proof}
\noindent Theorem~\ref{Thm:GeneralOperator} directly implies that any real constant $C$ with energy dimension in $f_{\rm eff} (\varepsilon) = - (1/\pi)\ln (\varepsilon/ C)$ will not affect the values of observables. For simplicity, we can set $C = 1$ and place $\varepsilon$ alone inside the logarithm.
\begin{corollary}\label{Thm:TracelessOperator} 
    Observables associated with traceless operators are invariant under gauge transformations of two separate NH Fermi-Dirac distributions $f_{\rm eff} \rightarrow f_{\rm eff} + C_1$ associated with $\psi^\mathrm{L*}_{ni}\psi^\mathrm{R}_{nj}$, and $f^*_{\rm eff} \rightarrow [f_{\rm eff} + C_2]^*$ associated with $\psi^\mathrm{R*}_{ni}\psi^\mathrm{L}_{nj}$, where $C_1$ and $C_2$ are two $\varepsilon_n$-independent complex numbers.
\end{corollary}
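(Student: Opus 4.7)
The plan is to deduce Corollary~1 as an immediate strengthening of Theorem~1, by relaxing the restriction $C_1 = C_2 \in \mathbb{R}$ whenever $\mathcal{O}$ is traceless. First I would apply Lemma~1 to the two independent shifts $f_{\rm eff} \to f_{\rm eff} + C_1$ and $f^*_{\rm eff} \to [f_{\rm eff} + C_2]^*$: by Eq.~\eqref{Eqn:CorrelatorGauge}, the normal correlator acquires $\delta_{ij}(C_1 - C_2^*)/(2\mathrm{i})$, while the anomalous correlator $\braket{c_i c_j}$ is unchanged (this asymmetry between normal and anomalous sectors is exactly what the lemma records, and both sectors will matter in the BdG-doubled setting).

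Next, I would substitute the transformed correlators into $\braket{O} = \mathrm{Im}\,\mathrm{Tr}[\mathcal{O} f_{\rm eff}(\mathcal{H}_{\rm eff})]/\mathbf{2}$, written explicitly as a sum over the $\mathcal{O}_{ij}$ entries weighted by $\braket{c^\dagger_i c_j}$ and $\braket{c_i c_j}$. The anomalous block contributes nothing under the gauge shift, and the only surviving piece from the normal block is the diagonal $\delta_{ij}$ term, producing a correction proportional to $\mathrm{Tr}(\mathcal{O})\,(C_1 - C_2^*)/(2\mathrm{i}\mathbf{2})$ — a complex-valued generalization of the $\mathrm{Tr}(\mathcal{O})\,\mathrm{Im}(C)$ obstruction that appeared in Theorem~1. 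The hypothesis $\mathrm{Tr}(\mathcal{O})=0$ then collapses this correction to zero for arbitrary $C_1, C_2 \in \mathbb{C}$, which is precisely the claim.

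The main obstacle, though a mild one, will be the bookkeeping in the doubled basis: I must verify that the trace is taken over the full $\mathbf{2}N$-dimensional matrix $\mathcal{O}$ (or the $N$-dimensional one in the normal-metal case), and that the index relabeling $i \to N+i$ used to access $\braket{c_i c_j}$ does not sneak in an extra diagonal contribution. Lemma~1 already guarantees this, since the Kronecker delta generated by biorthogonality, $\sum_n \psi^{\textsc{l}*}_{n,N+i}\psi^{\textsc{r}}_{nj} = \delta_{N+i,j}$, vanishes identically for physical sites $i,j \in [1,N]$. I would close by remarking that the persistent-current operator $\mathcal{J}$, being built from purely off-diagonal hopping matrix elements, is manifestly traceless, so Corollary~1 is the natural gauge-invariance statement underlying the formula \eqref{Eqn:CurrentImTrLog} for $I(\phi)$.
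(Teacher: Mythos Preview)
Your proposal is correct and follows essentially the same route as the paper, which dispatches the corollary in a single sentence by citing Lemma~1 (Eq.~\eqref{Eqn:CorrelatorGauge}) and the trace obstruction Eq.~\eqref{Eqn:TraceOGauge} from Theorem~1. Your added bookkeeping on the BdG-doubled basis and the observation that the anomalous sector cannot generate a diagonal contribution are precisely the details the paper leaves implicit, and your closing remark that $\mathcal{J}$ is traceless mirrors the paper's immediate application of the corollary.
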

\noindent This Corollary can be readily derived from Eq.~\eqref{Eqn:CorrelatorGauge} in Lemma~\ref{Thm:Correlators} and Eq.~\eqref{Eqn:TraceOGauge} in Theorem~\ref{Thm:GeneralOperator}. An paradigmatic instance for a traceless operator is the persistent current operator $\mathcal{J}$ studied in this work. To obtain the analytical expression in Eq.~\eqref{Eqn:CurrentImTrLog} at zero temperature, we transform $f_{\rm eff}$ by assigning specific values to $C_{1}$ and $C_{2}$:
\begin{align}
    I(\phi)
    &= \frac{\mathrm{i}}{2\pi} \sum_n \left( \frac{\mathrm{d} \varepsilon_n}{\mathrm{d} \phi} \ln \varepsilon_n - \frac{\mathrm{d} \varepsilon^*_n}{\mathrm{d} \phi} \ln \varepsilon^*_n  \right)
    = \frac{\mathrm{i}}{2\pi} \sum_n \left[ \frac{\mathrm{d} \varepsilon_n}{\mathrm{d} \phi} (\ln \varepsilon_n + 1) - \frac{\mathrm{d} \varepsilon^*_n}{\mathrm{d} \phi} (\ln \varepsilon^*_n + 1)  \right], \quad \text{with} \quad C_1 = C_2 = -\frac{1}{\pi} \,, \nonumber \\
    &= \frac{\mathrm{i}}{2\pi} \frac{\mathrm{d}}{\mathrm{d} \phi} \sum_n \left( \varepsilon_n \ln \varepsilon_n - \varepsilon^*_n \ln \varepsilon^*_n  \right)
    = - \frac{1}{\pi} \frac{\mathrm{d}}{\mathrm{d} \phi} \mathrm{Im} \sum_n \varepsilon_n \ln \varepsilon_n 
    = - \frac{1}{\pi} \frac{\mathrm{d}}{\mathrm{d} \phi} \mathrm{Im} \mathrm{Tr} \left( \mathcal{H}_{\rm eff} \ln \mathcal{H}_{\rm eff} \right) \,.
\end{align}
One can proceed to transform with $C_1 = 0,\; C_2 = + \mathrm{i}$ and exploit the particle-hole symmetry of superconductors, $\varepsilon_{-n} = - \varepsilon^*_n$, to obtain $\mathrm{i} / \pi \, \partial_\phi \mathrm{Tr} ( \mathcal{H}_{\rm eff} \ln \mathcal{H}_{\rm eff})$ that is specific for supercurrents. In the same vein, the persistent current at $T \neq 0$ is given by
\begin{align}
I(\phi)
    &= \frac{\mathrm{i}}{2\pi} \sum_n \left\{ \frac{\mathrm{d} \varepsilon_n}{\mathrm{d} \phi} \left[ \Psi \left( \frac{1}{2} + \frac{\mathrm{i} \beta \varepsilon_n}{2\pi} \right)-\frac{\mathrm{i} \pi }{2} \right] - \frac{\mathrm{d} \varepsilon^*_n}{\mathrm{d} \phi} \left[ \Psi \left( \frac{1}{2} + \frac{\mathrm{i} \beta \varepsilon_n}{2\pi} \right)-\frac{\mathrm{i} \pi }{2} \right]^*  \right\} \,, \qquad \qquad \; \text{with} \quad C_1 = C_2 = - \frac{\mathrm{i}}{2} \,, \nonumber \\
    &= \frac{\mathrm{i}}{2\pi} \sum_n \left[ \frac{\mathrm{d} \varepsilon_n}{\mathrm{d} \phi} \Psi \left( \frac{1}{2} + \frac{\mathrm{i} \beta \varepsilon_n}{2\pi} \right) - \frac{\mathrm{d} \varepsilon^*_n}{\mathrm{d} \phi} \Psi \left( \frac{1}{2} -\frac{\mathrm{i} \beta \varepsilon^*_n}{2\pi} \right) \right] \nonumber
    = \frac{1}{\beta} \frac{\mathrm{d}}{\mathrm{d} \phi} \sum_n \left[ \mathrm{log\Gamma} \left( \frac{1}{2} + \frac{\mathrm{i} \beta \varepsilon_n}{2\pi} \right) + \mathrm{log\Gamma} \left( \frac{1}{2} + \frac{\mathrm{i} \beta \varepsilon_n}{2\pi} \right)^* \right] \nonumber \\
    &= \frac{2}{\beta} \frac{\mathrm{d}}{\mathrm{d} \phi} \sum_n \mathrm{Re} \,\mathrm{log\Gamma} \left( \frac{1}{2} + \frac{\mathrm{i} \beta \varepsilon_n}{2\pi} \right) 
    = \frac{2}{\beta} \frac{\mathrm{d}}{\mathrm{d} \phi} \mathrm{Re} \,\mathrm{Tr}\, \mathrm{log\Gamma} \left( \frac{1}{2} + \frac{\mathrm{i} \beta }{2\pi} \mathcal{H}_{\rm eff} \right) \,,
\end{align}
where $\mathrm{log\Gamma}$ is the log-gamma function \cite{Weisstein2024LogGamma}, whose derivative is the digamma function $\Psi$.
Similarly, the particle-hole symmetry of superconductors implies $\varepsilon_{-n} = - \varepsilon^*_n$, which in turn can be used to eliminate the $\mathrm{Re}$ operation for supercurrents.

\subsection{Asymptotic behavior}

We denote $\epsilon = \mathrm{Re}\, \varepsilon$ and summarize the asymptotic behavior of $f_{\rm eff}$ in the Hermitian limit where $\mathrm{Im}\, \varepsilon \rightarrow 0$,
\begin{align}
f_{\rm eff} (\varepsilon) =
\begin{cases}
    \displaystyle
    - \frac{1}{\pi} \ln |\epsilon| + \mathrm{i} \Theta(-\epsilon) \,, & T = 0 \,, \\
    \displaystyle
    - \frac{1}{\pi} \mathrm{Re} \left[ \Psi \left(\frac{1}{2} + \frac{\mathrm{i} \beta \epsilon}{2\pi} \right) \right] + \mathrm{i} \frac{1}{\mathrm{e}^{\beta \epsilon} + 1} \,, & T \neq 0 \,, \\
\end{cases}
\; \Longrightarrow \;
\mathrm{Im}\, f_{\rm eff} (\varepsilon) =
\begin{cases}
    \displaystyle
    f_\textsc{fd}(\epsilon) = \Theta(-\epsilon) \,, & T = 0 \,, \\
    \displaystyle
    f_\textsc{fd}(\epsilon, \beta) = \frac{1}{\mathrm{e}^{\beta \epsilon} + 1} \,, & T \neq 0 \,, \\
\end{cases}
\end{align}
which is illustrated in Fig.~\figref{Fig:TemparatureInteraction}{c}. 
To extract the asymptotic behavior of Eq.~\eqref{Eqn:EffectiveDistributionBeta} for the inverse temperature $\beta=1/k_\textsc{b} T$ we write:
\begin{align}
f_{\rm eff}(\varepsilon, \beta) 
=
\begin{cases}
    \displaystyle
    - \frac{1}{\pi} \left[ \cancel{\ln \left(\frac{\beta}{2\pi} \right)} + \ln (\varepsilon) - \frac{\pi^2}{6 \beta^2 \varepsilon^2} \right] 
    \cong - \frac{1}{\pi} \left( \ln \varepsilon - \frac{\pi^2}{6 \beta^2 \varepsilon^2} \right) \,,
    & \beta \rightarrow \infty \,, \\
    \displaystyle
    - \frac{1}{\pi} \left[ \cancel{\Psi \left(\frac{1}{2} \right)} - \mathrm{i} \pi \Big( \frac{1}{2} - \frac{\beta \varepsilon}{4} \Big) \right] 
    \cong \mathrm{i} \Big( \frac{1}{2} - \frac{\beta \varepsilon}{4} \Big) \,,
    & \beta \rightarrow 0 \,, \\
\end{cases}
\end{align}
where we use $\cong$ to represent the gauge equivalence after eliminating real constants (denoted by a slash) according to Theorem~\ref{Thm:GeneralOperator}. The first line goes back to the $T=0$ expression. The imaginary of the second line is consistent with the expansion of $f_\textsc{fd}(\varepsilon, \beta)$ at $\beta \rightarrow 0$. The asymptotic behavior of the persistent current $I(\phi)$ in the Hermitian limit is summarized below: 
\begin{align*}
I(\phi)=
\begin{cases}
    \displaystyle
    -\frac{1}{\pi} \frac{\mathrm{d}}{\mathrm{d} \phi} \mathrm{Im} \sum_n \epsilon_n \Big[ \cancel{\ln |\epsilon_n|} - \mathrm{i} \pi \Theta(-\epsilon_n) \Big] 
    = \underbrace{\frac{\mathrm{d}}{\mathrm{d} \phi} \sum_{n \leqslant 0} \epsilon_n 
    = \frac{\mathrm{d} E_0}{\mathrm{d} \phi}}_{\text{normal metal}} 
    = \underbrace{2 \frac{\mathrm{d}}{\mathrm{d} \phi} \sum_{n \leqslant 0} \frac{\epsilon_n}{2} 
    = 2 \frac{\mathrm{d} E_0}{\mathrm{d} \phi}}_{\text{superconductor}} \,, & T=0 \,, \\
    \displaystyle
    \frac{1}{\beta} \frac{\mathrm{d}}{\mathrm{d} \phi} \sum_n \left[ \cancel{\ln (2\pi)} + \xcancel{\frac{\beta \epsilon_n}{2}} - \ln (1 + \mathrm{e}^{\beta \epsilon_n}) \right]
    = \frac{\mathrm{d}}{\mathrm{d} \phi} \left(\frac{\ln Z }{-\beta}\right) 
    = \frac{\mathrm{d} F}{\mathrm{d} \phi} \,, & \text{normal metal,} \\
    \displaystyle
    \frac{1}{\beta} \frac{\mathrm{d}}{\mathrm{d} \phi} \sum_n \left[ \cancel{\ln (2\pi)} - \ln (\mathrm{e}^{+\beta \epsilon_n/2} + \mathrm{e}^{-\beta \epsilon_n/2}) \right] 
    = 2 \frac{\mathrm{d}}{\mathrm{d} \phi} \sum_{n \geqslant 0} \left[ - \frac{1}{\beta} \ln (\mathrm{e}^{+\beta \epsilon_n/2} + \mathrm{e}^{-\beta \epsilon_n/2}) \right] 
    = 2 \frac{\mathrm{d} F}{\mathrm{d} \phi} \,, & \text{superconductor,}
\end{cases}
\end{align*}
where $Z$ is the partition function, $F$ is the free energy, and we remove the cross terms based on the fact that the trace of the Hamiltonian is independent of $\phi$. To obtain the last two lines at finite temperatures, we should utilize the reflection identity of $\mathrm{log\Gamma}$ \cite{Weisstein2024LogGamma} to expand the $\mathrm{Re}\mathrm{Tr}$ term in Eq.~\eqref{Eqn:CurrentReTrLogGamma}:
\begin{align}
\mathrm{log\Gamma}(1-z)+\mathrm{log\Gamma}(z)=\ln (\pi)-\ln \sin (\pi  z)\,, \quad z = \frac{1}{2} + \frac{\mathrm{i} \beta \epsilon}{2\pi}\,, \quad - \frac{1}{2} < \mathrm{Re} (z) <  \frac{\pi}{2} \,, \quad \epsilon \in \mathbb{R} \,.
\end{align}
Similarly, for the asymptotic behavior of the persistent current $I(\phi,\beta)$ in Eq.~\eqref{Eqn:CurrentReTrLogGamma} with respect to $\beta$, we find:
\begin{align*}
\begin{cases}
    \displaystyle
    2 \frac{\mathrm{d}}{\mathrm{d} \phi} \mathrm{Re} \sum_n \left[ \frac{\mathrm{i} }{2 \pi} \left(\varepsilon_n \ln \varepsilon_n + \xcancel{\varepsilon_n \ln \left(\frac{\mathrm{i} \beta }{2 \pi \mathrm{e}}\right)}\right) + \cancel{\frac{\ln (2 \pi)}{2\beta}} + \frac{\mathrm{i} \pi}{12\beta^2 \varepsilon_n} \right] 
    \cong -\frac{1}{\pi} \frac{\mathrm{d}}{\mathrm{d} \phi} \sum_n \mathrm{Im} \left( \varepsilon_n \ln \varepsilon_n + \frac{\pi^2}{6\beta^2 \varepsilon_n} \right) \,,
    & \beta \rightarrow \infty \,, \\
    \displaystyle
    2 \frac{\mathrm{d}}{\mathrm{d} \phi} \mathrm{Re} \sum_n \left[ \cancel{\frac{\ln \pi}{2\beta}} + \xcancel{\frac{ \mathrm{i} \Psi(1/2)}{2\pi} \varepsilon_n} - \frac{\beta}{16} \varepsilon^2_n \right] 
    = - \frac{\beta}{4} \mathrm{Re} \sum_n \left( \varepsilon_n \frac{\mathrm{d} \varepsilon_n}{\mathrm{d} \phi} \right) \,,
    & \beta \rightarrow 0 \,,
\end{cases}
\end{align*}
where the cross terms are eliminated because the trace of the Hamiltonian is independent of $\phi$. The first line reduces to the zero temperature expression in Eq.~\eqref{Eqn:CurrentImTrLog}, and the second line agrees with the $\beta/4$ scaling as the Hermitian case. The coherent persistent current will decrease to zero due to significant thermal fluctuations as $\beta \rightarrow 0$.

\subsection{Current continuity at EPs}

In the vicinity of a second-order EP at $\varepsilon_\textsc{ep}$, the eigenvalues $\varepsilon_\pm(\phi)$ follow a square-root scaling as a function of $\phi$:
\begin{align}
    \label{Eqn:SquareRootEP}
    \varepsilon_\pm(\phi) = \varepsilon_\textsc{ep} \pm \sqrt{s(\phi)} = \mu - \mathrm{i} \nu(\phi) \pm \sqrt{s(\phi)}, \quad
    \varepsilon'_\pm(\phi) = - \mathrm{i} \nu'(\phi) \pm \frac{s'(\phi)}{2\sqrt{s(\phi)}} \,,
\end{align}
where $\nu(\phi)$ and $s(\phi)$ are bounded functions and smoothly vary as $\phi$ \cite{Heiss2012Physics}. We use an EP-subscript to represent the value of a variable at EP. The derivatives of two EP-modes $\varepsilon'_\pm(\phi_\textsc{ep})$ are divergent at the EP due to the branch point inside the square root $\lim_{\phi \rightarrow \phi_\textsc{ep}} s(\phi) = 0$. In the following discussion on differentiability, we omit the explicit dependence on $\phi$ unless otherwise stated. 
We first analyze the trace term in the persistent current $I(\phi) = -1/\pi\; \partial_\phi \mathrm{Im} \mathrm{Tr} (\mathcal{H}_{\rm eff} \ln \mathcal{H}_{\rm eff})$ at $T = 0$,
\begin{align}
    \mathrm{Tr} (\mathcal{H}_{\rm eff} \ln \mathcal{H}_{\rm eff}) 
    &= \sum_{n = \pm} \varepsilon_n \ln \varepsilon_n + \sum_{n \neq \pm} \varepsilon_n \ln \varepsilon_n 
    = \frac{\varepsilon_{+} - \varepsilon_{-}}{2} (\ln \varepsilon_{+} - \ln \varepsilon_{-}) + \frac{\varepsilon_{+} + \varepsilon_{-}}{2} (\ln \varepsilon_{+} + \ln \varepsilon_{-}) + \sum_{n \neq \pm} \varepsilon_n \ln \varepsilon_n \nonumber \\
    &= \sqrt{s(\phi)} (\ln \varepsilon_{+} - \ln \varepsilon_{-}) + [\mu - \mathrm{i} \nu(\phi)] (\ln \varepsilon_{+} + \ln \varepsilon_{-}) + \sum_{n \neq \pm} \varepsilon_n \ln \varepsilon_n \,.
\end{align}
The last term above is from the bulk states and is differentiable near EPs, thus we can only analyze the EP-modes $\varepsilon_\pm(\phi)$,
\begin{align}
\frac{\mathrm{d}}{\mathrm{d} \phi} \sum_{n = \pm} \varepsilon_n \ln \varepsilon_n
    &= \varepsilon'_{+} (1 + \ln \varepsilon_{+}) + \varepsilon'_{-} (1 + \ln \varepsilon_{-}) 
    = \frac{\varepsilon'_{+} - \varepsilon'_{-}}{2} (\ln \varepsilon_{+} - \ln \varepsilon_{-}) + \frac{\varepsilon'_{+} + \varepsilon'_{-}}{2} (2 + \ln \varepsilon_{+} + \ln \varepsilon_{-}) \\
    &= s'(\phi) \frac{\ln \big(\varepsilon_\textsc{ep} + \sqrt{s(\phi)} \big) - \ln \big(\varepsilon_\textsc{ep} - \sqrt{s(\phi)} \big)}{2\sqrt{s(\phi)}} - \mathrm{i} \nu'(\phi) \left[ 2 + \ln \big(\varepsilon_\textsc{ep} + \sqrt{s(\phi)} \big) + \ln \big(\varepsilon_\textsc{ep} - \sqrt{s(\phi)} \big) \right] \nonumber \,,
\end{align}
where we substitute the square-root expansion in Eq.~\eqref{Eqn:SquareRootEP}. Using $\lim_{\phi \rightarrow \phi_\textsc{ep}} s(\phi) = 0$, we obtain
\begin{align}
    \label{Eqn:LogEPs}
    \lim_{\phi \rightarrow \phi_\textsc{ep}} \frac{\mathrm{d}}{\mathrm{d} \phi} \sum_{n = \pm} \varepsilon_n \ln \varepsilon_n
    &=\frac{s'(\phi_\textsc{ep})}{\varepsilon_\textsc{ep}} - \mathrm{i} \nu'(\phi_\textsc{ep}) ( 2 + 2 \ln \varepsilon_\textsc{ep} ) 
    = -\pi \left[s'(\phi_\textsc{ep}) f'_{\rm eff}(\varepsilon_\textsc{ep})  - 2 \mathrm{i} \nu'(\phi_\textsc{ep}) \left( f_{\rm eff}(\varepsilon_\textsc{ep}) - \frac{1}{\pi} \right) \right] \,.
\end{align}
Following similar steps, we obtain the contribution of two EP-modes in the trace term at $T \neq 0$,
\begin{align}
    \label{Eqn:LogGammaEPs}
    \lim_{\phi \rightarrow \phi_\textsc{ep}} \frac{\mathrm{d}}{\mathrm{d} \phi} \sum_{n = \pm} \mathrm{log\Gamma} \left( \frac{1}{2} + \frac{\mathrm{i} \beta \varepsilon_n}{2\pi} \right)
    &= \frac{\beta}{2\mathrm{i}} \left[s'(\phi_\textsc{ep}) f'_{\rm eff}(\varepsilon_\textsc{ep}, \beta)  - 2 \mathrm{i} \nu'(\phi_\textsc{ep}) \left( f_{\rm eff}(\varepsilon_\textsc{ep}, \beta) - \frac{\mathrm{i}}{2} \right) \right] \,.
\end{align}
Due to the analyticity of $f_{\rm eff} (\varepsilon)$ in the lower half of the complex plane, its derivative $f'_{\rm eff} (\varepsilon_\textsc{ep})$ exists. Hence, both Eqs.~\eqref{Eqn:LogEPs} and \eqref{Eqn:LogGammaEPs} are bounded, ensuring that the persistent current $I(\phi)$ exhibits a continuous and smooth behavior in the vicinity of EPs.

\section{Current Susceptibility}

\subsection{Non-Hermitian approach}

The current susceptibility in superconducting systems can be expanded into a set of four-point electronic correlators by substituting the site-resolved current operator in the Heisenberg picture $J (\tau) = -\mathrm{i} t_j [ c^\dagger_j (\tau) c_{j+1} (\tau) - c^\dagger_{j+1} (\tau) c_j (\tau)]$:
\begin{align}
\label{Eqn:Susceptibility2FourCorrelator}
&\Pi(\tau) 
= - \mathrm{i} \Theta(\tau) \braket{[J(\tau), J(0)]}
= - \mathrm{i} \Theta(\tau) (- t_j^2) 
\braket{ [c^\dagger_j (\tau) c_{j+1} (\tau) - c^\dagger_{j+1} (\tau) c_j (\tau),  c^\dagger_j c_{j+1} - c^\dagger_{j+1} c_j ] } \\
&= \mathrm{i} \Theta(\tau) t_j^2 \left[ \braket{ [c^\dagger_j (\tau) c_{j+1} (\tau), c^\dagger_j c_{j+1}] } - \braket{ [c^\dagger_j (\tau) c_{j+1} (\tau), c^\dagger_{j+1} c_j] } + \braket{ [c^\dagger_{j+1} (\tau) c_j (\tau), c^\dagger_{j+1} c_j] } - \braket{ [c^\dagger_{j+1} (\tau) c_j (\tau), c^\dagger_j c_{j+1}] } \right] \,, \nonumber
\end{align}
where we denote $c_j \equiv c_j(0)$ and omit the subscript in $J \equiv J_j$, since $j$ is always set at the first site of $\mathbb{N}$. Using identities for the propagator $\mathrm{i}\tilde{g}(\tau)$ and the lesser Green's function \cite{Economou2006Green},
\begin{align}
    \{ c_i(\tau), c^\dagger_j \} = \mathrm{i}\tilde{g}_{i,j}(\tau) \,, \quad
    \{ c_i(\tau), c_j \} &= \mathrm{i}\tilde{g}_{i,N+j}(\tau) \,, \quad
    \{ c^\dagger_i(\tau), c^\dagger_j \} = \mathrm{i}\tilde{g}_{N+i,j}(\tau) \,, \\
    \braket{c^\dagger_i c_j(\tau)} = \int \mathrm{d} \omega \mathrm{e}^{-\mathrm{i} \omega \tau} f_\textsc{fd}(\omega) \rho_{j,i}(\omega) \,,
    \braket{c_i c_j(\tau)} &= \int \mathrm{d} \omega \mathrm{e}^{-\mathrm{i} \omega \tau} f_\textsc{fd}(\omega) \rho_{j,N+i}(\omega) \,,
    \braket{c^\dagger_i c^\dagger_j(\tau)} = \int \mathrm{d} \omega \mathrm{e}^{-\mathrm{i} \omega \tau} f_\textsc{fd}(\omega) \rho_{N+j,i}(\omega) \,, \nonumber
\end{align}
we can transform the four-point electronic correlator to an integral involving the density of state operator $\rho_{i,j}(\omega) \equiv \braket{i|\rho(\omega)|j}$,
\begin{align}
\label{Eqn:FourCorrelator2Rho}
    \braket{ [c^\dagger_{i} (\tau) c_{j} (\tau), c^\dagger_k c_{l}] }
    &= \int_{-\infty}^\infty \mathrm{d} \omega_1 \mathrm{d} \omega_2 f_\textsc{fd}(\omega_1) \Big[
        \rho_{j,k}(\omega_2) \rho_{l,i}(\omega_1) \mathrm{e}^{-\mathrm{i} (\omega_1 - \omega_2) (-\tau)} -
        \rho_{l,i}(\omega_2) \rho_{j,k}(\omega_1) \mathrm{e}^{-\mathrm{i} (\omega_1 - \omega_2) (+\tau)} \nonumber \\
    &\qquad \qquad +
        \rho_{N+k,i}(\omega_2) \rho_{j,N+l}(\omega_1) \mathrm{e}^{-\mathrm{i} (\omega_1 - \omega_2) (+\tau)} - 
        \rho_{j,N+l}(\omega_2) \rho_{N+k,i}(\omega_1) \mathrm{e}^{-\mathrm{i} (\omega_1 - \omega_2) (-\tau)} \Big] \,.
\end{align}
Substituting Eq.~\eqref{Eqn:FourCorrelator2Rho} back to each four-point correlator in Eq.~\eqref{Eqn:Susceptibility2FourCorrelator}, and then performing the Fourier transform, we get 
\begin{align}
&\Pi(\omega) = \int \mathrm{d} \tau \Pi(\tau) \mathrm{e}^{+\mathrm{i} \omega \tau}
= (- t_j^2) \int_{-\infty}^\infty \mathrm{d} \omega_1 \mathrm{d} \omega_2 \frac{f_\textsc{fd}(\omega_1) - f_\textsc{fd}(\omega_2)}{\omega + \omega_1 - \omega_2 + \mathrm{i} \eta} \\
&\times \Big\{[\rho_{j+1,j}(\omega_1) \rho_{j+1,j}(\omega_2) - \rho_{j,j}(\omega_1) \rho_{j+1,j+1}(\omega_2) + \rho_{j,j+1}(\omega_1) \rho_{j,j+1}(\omega_2) - \rho_{j+1,j+1}(\omega_1) \rho_{j,j}(\omega_2)] + \nonumber \\
&[\rho_{N+j+1,j}(\omega_1) \rho_{j+1,N+j}(\omega_2) - \rho_{N+j,j}(\omega_1) \rho_{j+1,N+j+1}(\omega_2) + \rho_{N+j,j+1}(\omega_1)  \rho_{j,N+j+1}(\omega_2) - \rho_{N+j+1,j+1}(\omega_1) \rho_{j,N+j}(\omega_2)]
\Big\} \,. \nonumber
\end{align}
Using the identify $\lim _{\eta \rightarrow 0^{+}}[1/(x \pm \mathrm{i} \eta)] = \mathrm{PV} \left(1/x\right) \mp \mathrm{i} \pi \delta(x)$, we obtain its imaginary part
\begin{align}
&\mathrm{Im} \; \Pi(\omega) 
= \pi t_j^2 \int_{-\infty}^\infty \mathrm{d} \omega_1 \mathrm{d} \omega_2 \delta(\omega + \omega_1 - \omega_2) [f_\textsc{fd}(\omega_1) - f_\textsc{fd}(\omega_2)] \\
&\times \mathrm{Re} \Big\{ 
[\rho_{j+1,j}(\omega_1) \rho_{j+1,j}(\omega_2) - \rho_{j,j}(\omega_1) \rho_{j+1,j+1}(\omega_2) + \rho_{j,j+1}(\omega_1) \rho_{j,j+1}(\omega_2) - \rho_{j+1,j+1}(\omega_1)  \rho_{j,j}(\omega_2)] + \nonumber \\
&
[\rho_{N+j+1,j}(\omega_1) \rho_{j+1,N+j}(\omega_2) - \rho_{N+j,j}(\omega_1) \rho_{j+1,N+j+1}(\omega_2) + \rho_{N+j,j+1}(\omega_1) \rho_{j,N+j+1}(\omega_2) - \rho_{N+j+1,j+1}(\omega_1) \rho_{j,N+j}(\omega_2)] \Big\} \,. \nonumber 
\end{align}
Specifically in the zero temperature  $f_\textsc{fd}(\omega) = \Theta(-\omega)$, we can exploit the $\delta$-function to integrate out $\omega_2$ and thus obtain Eq.~\eqref{Eqn:ImaginarySusceptibility} in the main text.
The integral $P_{ijkl} (\omega) \equiv \int_{-\infty}^0 \braket{i|\rho(\omega')|j} \braket{k|\rho(\omega + \omega')|l} \mathrm{d} \omega'$ can be analytically derived as follows:
\begin{align}
\label{Eqn:PintegralFullForm}
&P_{ijkl}(\omega)
= \Big(\frac{\rm i}{2 \pi}\Big)^2 \int_{-\infty}^0 \sum_{nm} 
\left[\frac{\braket{i|\psi^\textsc{r}_n}\braket{\psi^\textsc{l}_n|j}}{\omega' - \varepsilon_n} - \frac{\braket{i|\psi^\textsc{l}_n}\braket{\psi^\textsc{r}_n|j}}{\omega' - \varepsilon_n^*} \right] 
\left[\frac{\braket{k|\psi^\textsc{r}_m}\braket{\psi^\textsc{l}_m|l}}{\omega' + \omega - \varepsilon_m} - \frac{\braket{k|\psi^\textsc{l}_m}\braket{\psi^\textsc{r}_m|l}}{\omega' + \omega - \varepsilon_m^*} \right] \mathrm{d} \omega' \nonumber \\
&= \Big(\frac{\rm i}{2 \pi}\Big)^2 \sum_{nm} \int_{-\infty}^0 
\left[ \frac{\psi^\textsc{r}_{ni}\psi^\textsc{l*}_{nj}}{\omega' - \varepsilon_n} \frac{\psi^\textsc{r}_{mk}\psi^\textsc{l*}_{ml}}{\omega' + \omega - \varepsilon_m} - \frac{\psi^\textsc{r}_{ni}\psi^\textsc{l*}_{nj}}{\omega' - \varepsilon_n} \frac{\psi^\textsc{l}_{mk}\psi^\textsc{r*}_{ml}}{\omega' + \omega - \varepsilon_m^*} + \frac{\psi^\textsc{l}_{ni}\psi^\textsc{r*}_{nj}}{\omega' - \varepsilon_n^*} \frac{\psi^\textsc{l}_{mk}\psi^\textsc{r*}_{ml}}{\omega' + \omega - \varepsilon_m^*} - \frac{\psi^\textsc{l}_{ni}\psi^\textsc{r*}_{nj}}{\omega' - \varepsilon_n^*} \frac{\psi^\textsc{r}_{mk}\psi^\textsc{l*}_{ml}}{\omega' + \omega - \varepsilon_m} \right] \mathrm{d} \omega' \nonumber \\
&= \Big(\frac{\rm i}{2 \pi}\Big)^2 \sum_{nm} 
\Big[+ \psi^\textsc{r}_{ni}\psi^\textsc{l*}_{nj}\psi^\textsc{r}_{mk}\psi^\textsc{l*}_{ml} \frac{\ln (- \varepsilon_n) - \ln(-\varepsilon_m + \omega)}{\varepsilon_n - \varepsilon_m + \omega} - \psi^\textsc{r}_{ni}\psi^\textsc{l*}_{nj}\psi^\textsc{l}_{mk}\psi^\textsc{r*}_{ml} \frac{\ln (+ \varepsilon_n) - \ln(+\varepsilon^*_m - \omega)}{\varepsilon_n - \varepsilon^*_m + \omega} \nonumber \\
&\qquad \qquad \qquad\;\; + \psi^\textsc{l}_{ni}\psi^\textsc{r*}_{nj}\psi^\textsc{l}_{mk}\psi^\textsc{r*}_{ml} \frac{\ln (- \varepsilon^*_n) - \ln(-\varepsilon^*_m + \omega)}{\varepsilon^*_n - \varepsilon^*_m + \omega} - \psi^\textsc{l}_{ni}\psi^\textsc{r*}_{nj} \psi^\textsc{r}_{mk}\psi^\textsc{l*}_{ml} \frac{\ln (+ \varepsilon^*_n) - \ln(+\varepsilon_m - \omega)}{\varepsilon^*_n - \varepsilon_m + \omega} \Big] \,.
\end{align}
One can simplify Eq.~\eqref{Eqn:PintegralFullForm} into Eq.~\eqref{Eqn:Pintegral} by using $f_{\rm eff} (\varepsilon) = - (1/\pi)\ln \varepsilon$ and the definition of $p^{nm\pm}_{ijkl}$ in the main text.

\subsection{Exact diagonalization}

\begin{figure}[b]
    \centering
    \begin{minipage}{0.49\textwidth}
        \centering
        \includegraphics[width=\textwidth]{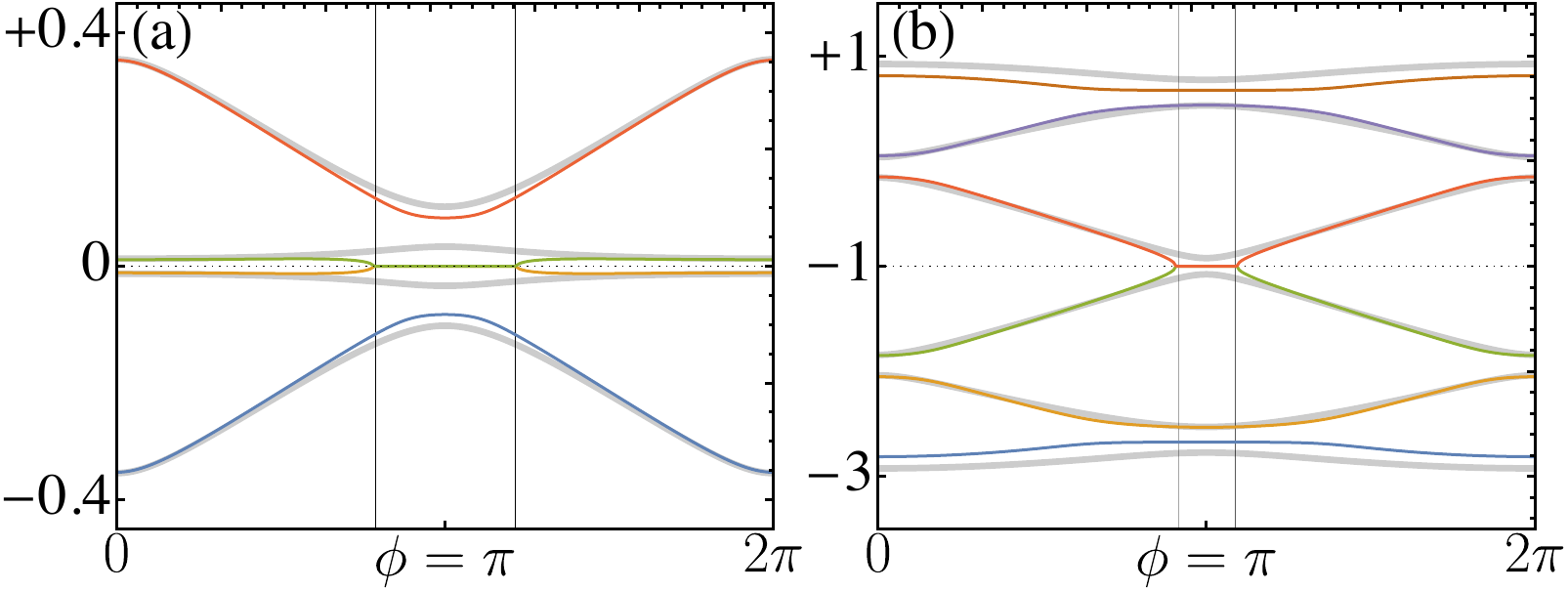}
        \caption{Supplemental plots for Fig.~\ref{Fig:SketchSpectrum}: (a) SNS junction and (b) normal metallic ring. The real parts of the low-energy spectra (colorful lines) vary with $\phi$, compared to the real spectra of the isolated system (gray lines). Their curvatures have been altered by dissipation.}
        \label{Fig:SpectraComparison}
    \end{minipage}\hfill
    \begin{minipage}{0.49\textwidth}
        \centering
        \includegraphics[width=\textwidth]{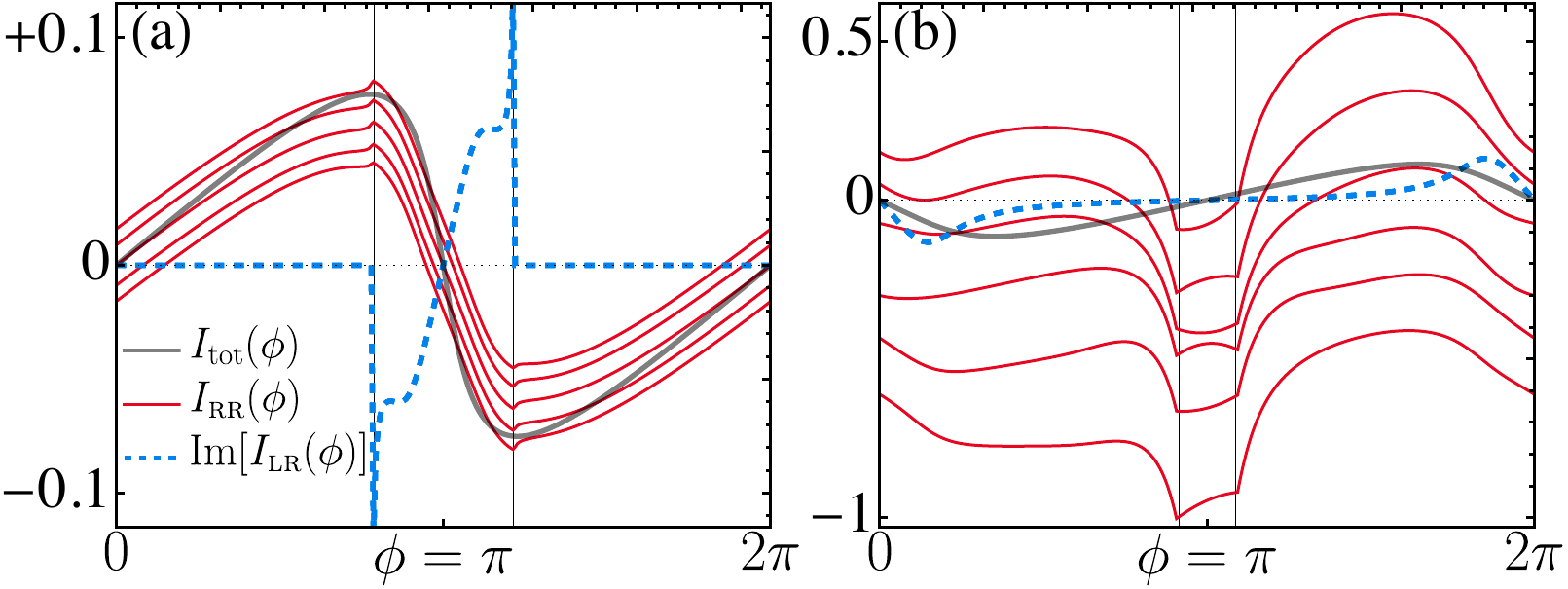}
        \caption{Supplemental plots for Fig.~\ref{Fig:EPsComparison}: (a) SNS junction and (b) normal metallic ring, where $I_{\rm tot}(\phi)$ (gray) acts as a reference. $I_\textsc{rr}(\phi)$ (red) shifts at different $j \in \mathbb{N}$, while $\mathrm{Im}[I_\textsc{lr}(\phi)]$ (dashed) displays the same divergent or finite curve as $\mathrm{Re}[I_\textsc{lr}(\phi)]$.}
        \label{Fig:EPsComparisonSupp}
    \end{minipage}
\end{figure}

\begin{figure}[t]
    \centering
    \begin{minipage}{0.49\textwidth}
        \centering
        \includegraphics[width=\textwidth]{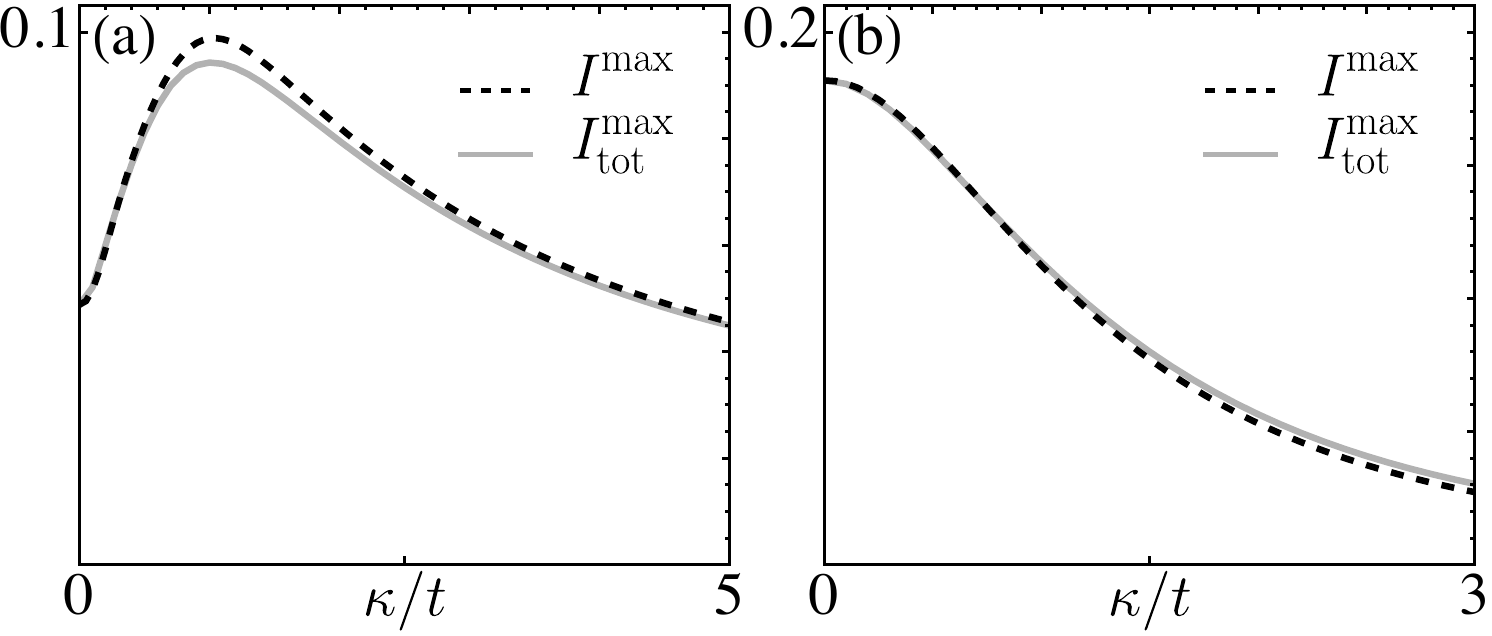}
        \caption{The current amplitudes variation with the coupling strength $\kappa$: (a) increases for small $\kappa$ in SNS junctions, but decreases for larger $\kappa$. (b) monotonically reduces as $\kappa$ increases in normal rings. The current amplitude $I^{\rm max}$ calculated by Eq.~\eqref{Eqn:CurrentImTrLog} closely matches the exact diagonalization $I^{\rm max}_{\rm tot}$ across all $\kappa$ examined.}
        \label{Fig:CurrentAmplitude}
    \end{minipage}\hfill
    \begin{minipage}{0.49\textwidth}
        \centering
        \includegraphics[width=\textwidth]{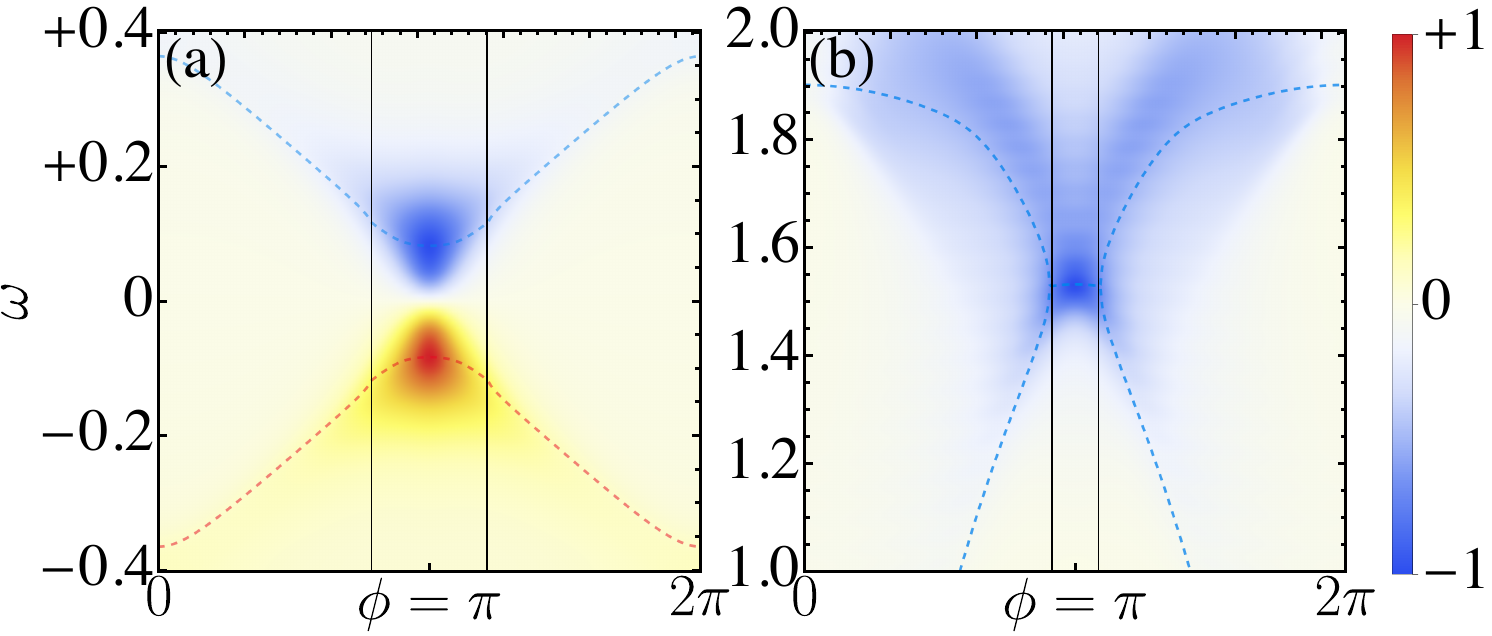}
        \caption{Supplemental plots for Fig.~\ref{Fig:Susceptibility}, where $\mathrm{Im}\,\Pi (\phi, \omega)$ is solved by exact diagonalization of Eq.~\eqref{Eqn:ExactSusceptibility}. We set $\eta = 0.03$ for both (a) the SNS junction and (b) the normal mesoscopic ring. Same as the NH case, the level broadened effect is evident between two EPs (black lines). The other parameters are the same as in Fig.~\ref{Fig:SketchSpectrum}.}
        \label{Fig:SusceptibilitySupp}
    \end{minipage}
\end{figure}

In the Hermitian case, the time evolution of the fermionic operator in superconducting systems can be expressed as $c_{j}(\tau) = \sum_n [u_{n}(j) \mathrm{e}^{-\mathrm{i} \epsilon_n \tau} d_{n} + v^*_{n}(j) \mathrm{e}^{+\mathrm{i} \epsilon_n \tau} d^\dagger_{n}]$, where it involves a set of Bogoliubons $d_{n}$ with energy $\epsilon_n \geqslant 0$ and the wavefunction $\psi_{n} = [\psi_{n}(1), \dots, \psi_n(N), \psi_n(N+1), \dots, \psi_n (2N)]^\textsc{t} = [u_{n}(1), \dots, u_n(N), v_n(1), \dots, v_n (N)]^\textsc{t}$. The current operator reads
\begin{align}
\label{Eqn:ExactCurrent}
J (\tau) &= - t_j \sum_{nm} \Big[ A_{nm} \mathrm{e}^{+\mathrm{i} (\epsilon_n - \epsilon_m) \tau} d^\dagger_{n} d_{m} + B_{nm} \mathrm{e}^{-\mathrm{i} (\epsilon_n + \epsilon_m) \tau} d_{n} d_{m} + B^*_{mn} \mathrm{e}^{+\mathrm{i} (\epsilon_n + \epsilon_m) \tau} d^\dagger_{n} d^\dagger_{m} + C_{nm} \Big] \,, \nonumber \\
A_{nm} &= \mathrm{i} [u^*_{n}(j) u_{m}(j+1) - u^*_{n}(j+1) u_{m}(j)] - \mathrm{i} [v_{m}(j) v^*_{n}(j+1) - v_{m}(j+1) v^*_{n}(j)] \,, \nonumber \\
B_{nm} &= \mathrm{i} [v_{n}(j) u_{m}(j+1) - v_{n}(j+1) u_{m}(j)]\,, \qquad
C_{nm} = \mathrm{i} \delta_{nm} [v_{n}(j) v^*_{n}(j+1) - v_{n}(j+1) v^*_{n}(j)]  \,.
\end{align}
The current susceptibility $\Pi(\tau)$ and its Fourier transform $\Pi(\omega)$ can be calculated using the occupation of Bogoliubons $\braket{d^\dagger_n d_m} = \delta_{nm} f_\textsc{fd}(\epsilon_n)$ (refer to Appendix~D of Ref.~\cite{Shen2023Majoranamagnon} for a pedagogical derivation),
\begin{align}
\label{Eqn:ExactSusceptibility}
\Pi(\tau) &= - \mathrm{i} \Theta(\tau) \braket{[J(\tau), J(0)]}
    = - \mathrm{i} \Theta(\tau) t_j^2 \sum_{nm} \Big\{
    A_{nm} A_{mn} \mathrm{e}^{+\mathrm{i} (\epsilon_n - \epsilon_m) \tau} [ f_\textsc{fd}(\epsilon_n) - f_\textsc{fd}(\epsilon_m) ] \nonumber \\
    &\quad + B_{nm} (B^*_{nm} - B^*_{mn}) \mathrm{e}^{-\mathrm{i} (\epsilon_n + \epsilon_m) \tau} [ f_\textsc{fd}(-\epsilon_n) - f_\textsc{fd}(\epsilon_m)] + B^*_{mn} (B_{mn} - B_{nm}) \mathrm{e}^{+\mathrm{i} (\epsilon_n + \epsilon_m) \tau} [ f_\textsc{fd}(\epsilon_n) - f_\textsc{fd}(-\epsilon_m)] \Big\} \,, \nonumber \\
\Pi(\omega) &= \int \mathrm{d} \tau \Pi(\tau) \mathrm{e}^{+\mathrm{i} \omega \tau}
    = t_j^2 \sum_{nm} \bigg[ A_{nm} A_{mn} \frac{f_\textsc{fd}(\epsilon_n) - f_\textsc{fd}(\epsilon_m)}{\omega + \epsilon_n - \epsilon_m + \mathrm{i} \eta} \nonumber \\
    &\quad + B_{nm} (B^*_{mn} - B^*_{nm}) \frac{f_\textsc{fd}(\epsilon_m) + f_\textsc{fd}(\epsilon_n) - 1}{\omega - \epsilon_n - \epsilon_m + \mathrm{i} \eta} + B^*_{mn} (B_{nm} - B_{mn}) \frac{1 - f_\textsc{fd}(\epsilon_m) - f_\textsc{fd}(\epsilon_n)}{\omega + \epsilon_n + \epsilon_m + \mathrm{i} \eta} \bigg] \,.
\end{align}
We remark that Eqs.~\eqref{Eqn:ExactCurrent} and \eqref{Eqn:ExactSusceptibility} still apply to normal metals by setting all the wavefunctions $v_n = 0$. This enables us to perform an exact diagonalization of the full Hermitian system $H_{\rm tot}$ including a large reservoir, whose results are presented in Fig.~\ref{Fig:SusceptibilitySupp} and show excellent agreement with the effective NH Hamiltonian approach in Fig.~\ref{Fig:Susceptibility}.

\section{Additional numerical verifications}

We provide additional low-energy spectra of NH systems in Fig.~\ref{Fig:SpectraComparison} as a supplement to Fig.~\ref{Fig:SketchSpectrum}, comparing them with the spectra of the decoupled Hermitian system. The slopes of these spectra indicate whether a mode's contribution to the current is positive or negative. In the parameter setting of SNS junctions shown in Fig.~\figref{Fig:SpectraComparison}{a}, the EP-modes $\varepsilon_\pm$ do not dominate the current; instead, it is primarily contributed by the adjacent mode below. Because of their opposite slopes, in the Hermitian case, the lower-energy mode $\varepsilon_-$ has a negative contribution to the current, which is effectively counterbalanced by $\varepsilon_+$ due to level broadening in the NH case. In the normal ring illustrated in Fig.~\figref{Fig:SpectraComparison}{b}, however, we observe that any positive contributions from single-particle modes are consistently moderated by the dissipation-induced level broadening.

Fig.~\ref{Fig:EPsComparisonSupp} is presented to complement the analysis of persistent currents discussed in Fig.~\ref{Fig:EPsComparison}. The RR-basis current $I_\textsc{rr}(\phi)$ exhibits peculiar shifts at different sites within the system, highlighting its violation of local conservation laws. Additionally, the imaginary part of the LR-basis current $\mathrm{Im}[I_\textsc{lr}(\phi)]$ is included to illustrate its similarity with its real part $\mathrm{Re}[I_\textsc{lr}(\phi)]$. 

We further investigate how the current amplitudes $I^{\rm max}$ vary with coupling strength $\kappa/t$ across different regimes in Fig.~\ref{Fig:CurrentAmplitude}. Panel (a) highlights the dual role of the thermal reservoir in the SNS junctions, where it enhances the current in the weak coupling regime but acts to suppress it under strong coupling conditions. Panel (b) shows a monotonic decrease in current amplitudes with increasing coupling strength in normal rings, aligning with the notion that dissipation suppresses the current amplitudes. In both models, the current amplitudes computed by Eq.~\eqref{Eqn:CurrentImTrLog} closely match those obtained through exact diagonalization across all examined coupling regimes, confirming its non-perturbative nature.

Fig.~\ref{Fig:SusceptibilitySupp} presents supplemental data for Fig.~\ref{Fig:Susceptibility}. The current susceptibility calculated by Eq.~\eqref{Eqn:ImaginarySusceptibility} is consistent with Eq.~\eqref{Eqn:ExactSusceptibility} through exact diagonalization of the entire Hermitian system, including a large reservoir. This confirms the signatures of EPs and validates the effectiveness of our analytical approach in capturing the dynamics of NH systems.

\begin{figure*}
    \centering
    \includegraphics[width=\textwidth]{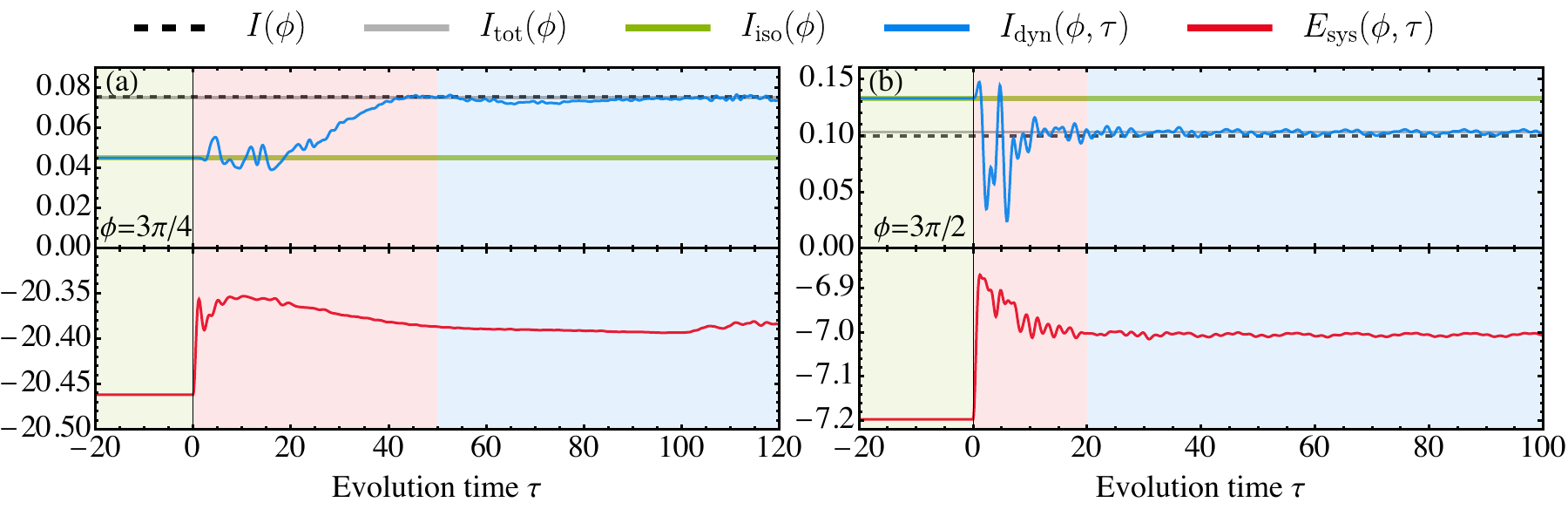}	
    \caption{Quench dynamics of (a) an SNS junction and (b) a normal ring coupled to thermal reservoirs. The upper panels show the evolution of persistent currents $I_{\rm dyn} (\phi, \tau)$ with a fixed $\phi$, and the lower panels present the corresponding evolution of the system energy $E_{\rm sys} (\phi, \tau)$. The green background ($\tau<0$) represents the decoupled regime where both systems are initially prepared and remain in their many-body ground states. At $\tau=0$, the systems are attached to the thermal reservoirs, entering a transient regime (red background) where they exchange energy with the reservoirs. In the long-time limit (blue background), both systems reach equilibrium, where $E_{\rm sys} (\phi, \tau)$ fluctuates around a fixed value and $I_{\rm dyn} (\phi, \tau)$ approaches $I(\phi)$ in Eq.~\eqref{Eqn:CurrentImTrLog} with small fluctuations. We set $N_\textsc{e} = 101$ for the SNS junction and $N_\textsc{e} = 201$ for the normal ring while other parameters are the same as in Fig.~\ref{Fig:SketchSpectrum}. The numerical values of $I(\phi)$, $I_{\rm tot}(\phi)$, and $I_{\rm iso}(\phi)$ are consistent with Fig.~\ref{Fig:EPsComparison}.}
    \label{Fig:CurrentDynamics}
\end{figure*}

Finally, we present numerical dynamics in Fig.~\ref{Fig:CurrentDynamics} to explicitly show that our formalism can accurately capture quantum many-body observables as the system reaches equilibrium in the long-time limit. Specifically, we follow the methodology from Eqs.~(3.1)-(3.5) in Ref.~\cite{Loss1991Dephasing}, adapting it to our systems: the SNS junction and the normal ring. For simplicity, we consider zero temperature and prepare the system and the reservoir in their many-body ground states, respectively, with their single-particle modes below the Fermi level fully occupied. For $\tau < 0$, the total system is governed by the decoupled Hamiltonian $H_{\rm dec} = H_{\rm sys} + H_{\rm res}$. Hence, the system energy $E_{\rm sys} (\phi, \tau)$ remains constant and the persistent current $I_{\rm dyn} (\phi, \tau) = I_{\rm iso} (\phi)$. However, at $\tau = 0$, we quench the system by coupling it to the reservoir, resulting in the total Hamiltonian $H_{\rm tot} = H_{\rm sys} + H_{\rm res} + H_{\rm tun}$. Consequently, the dynamic evolution of the total system for $\tau>0$ is governed by the unitary $U_{\rm tot}(\tau) = \exp(- \mathrm{i} H_{\rm tot} \tau)$ with $\hbar=1$. The persistent current $I_{\rm dyn} (\phi, \tau)$ can be calculated by the expectation value of the site-resolved current operator $J_j$.

As shown in Fig.~\ref{Fig:CurrentDynamics}, the dynamics can be categorized into three regimes: ($i$) the decoupled regime (green background) where both the system and the reservoir remain in their many-body ground states; ($ii$) the transient regime (red background) characterized by energy exchange with the reservoir; and ($iii$) the equilibrium regime (blue background) where the system energy $E_{\rm sys}(\phi, \tau)$ fluctuates, but its mean value averaged over the time window almost remains constant. The persistent current $I_{\rm dyn}(\phi, \tau)$ approaches the theoretical value $I(\phi)$ computed by our Eq.~\eqref{Eqn:CurrentImTrLog} with small fluctuations that encode the dephasing caused by the reservoir. This dynamic simulation confirms the validity of our formalism for capturing quantum many-body observables in the long-time limit where the system reaches equilibrium.

\end{document}